\newcommand{\drt}[1]{{\color{blue}#1}}
\crefname{hypothesis}{Hypothesis}{Hypotheses}
\title{{Synchronization of Network-Coupled Oscillators with Uncertain Dynamics}
\thanks{PSS and DT contributed equally to this work. DT was supported by the Simons Foundation, Award \#578333.}
}
\author{Per Sebastian Skardal\thanks{Department of Mathematics, Trinity College, Hartford, CT 06106 (\email{persebastian.skardal@trincoll.edu}, \url{https://sites.google.com/site/persebastianskardal/}).}
\and Dane Taylor\thanks{Department of Mathematics, University at Buffalo, State University of New York, Buffalo, New York 14260 (\email{danet@buffalo.edu}, \url{https://sites.google.com/site/danetaylorresearch/}).}
\and Jie Sun\thanks{Department of Mathematics, Clarkson University, Potsdam, New York 13699 (\email{sunj@clarkson.edu}, \url{http://people.clarkson.edu/\~sunj/}).}
}
\begin{document}

\maketitle

\begin{abstract}
Synchronization of network-coupled dynamical units is important to a variety of natural and engineered processes including circadian rhythms, cardiac function,  neural processing, and power grids. Despite this ubiquity, it remains poorly understood how complex network structures and heterogeneous local dynamics combine to either promote or inhibit synchronization. Moreover, for most real-world applications it is  impossible to obtain the exact specifications of the system, and  there is a lack of theory for how uncertainty affects synchronization. We address this  open problem by studying the Synchrony Alignment Function (SAF), which is an objective measure for the synchronization properties of a network of heterogeneous oscillators with given natural frequencies. We extend the SAF framework to analyze network-coupled oscillators with heterogeneous natural frequencies that are drawn as a multivariate random vector. Using probability theory for quadratic forms, we obtain expressions for the expectation and variance of the SAF for given network structures. We conclude with numerical experiments  that illustrate how the incorporation of uncertainty yields a more robust theoretical framework for enhancing synchronization, and we provide new perspectives for why synchronization  is generically promoted by network properties including degree-frequency correlations, link directedness, and link weight delocalization.
\end{abstract}

\begin{keywords}
  Synchronization, Complex Networks, Synchrony Alignment Function, Uncertainty
\end{keywords}

\begin{AMS}
  34C15, 34D06, 05C82
\end{AMS}

\section{Introduction}\label{sec:01}

The emergence of collective behavior in ensembles of network-coupled dynamical systems is a widely studied and active area of research in the dynamical systems and network-science communities~\cite{Arenas2008PR,Pikovsky2003}. Examples where synchronization is vital for enabling robust functionality are plentiful in both natural and engineered systems. Examples of natural systems that display synchronization include cardiac pacemakers~\cite{Mirollo1990SIAP}, circadian rhythms~\cite{Winfree1967JTB}, neuronal processing~\cite{Medvedev2001SIAP}, gene regulation~\cite{Kuznetsov2004SIAP}, intestinal activity~\cite{Aliev2000JTB}, and cell cycles~\cite{Prindle2012Nature}. Examples of engineered systems that require robust synchronization include Josephson junction arrays~\cite{Wiesenfeld1996PRL}, electrochemical oscillators~\cite{Kazanci2007SIAP}, and power grids~\cite{Rohden2012PRL,Skardal2015SciAdv}. On the other hand, in certain systems the excess of synchronization may lead to unfavorable consequences, including bridge oscillations \cite{Strogatz2005Nature} and Parkinson's disease, tremors, and seizures~\cite{Schnitzler2005Nature}.

Given the widespread importance of robust synchronization in so many applications, a broad but practical question that still remains poorly understood is the following: How do the dynamical and structural properties in a network of coupled oscillators affect the macroscopic synchronization properties of the system as a whole? A large body of literature indicates that the microscopic properties of a network's topology and the local dynamics and coupling mechanism in oscillator networks combine in very complicated ways and can lead to a wide range of novel dynamical behaviors~ \cite{Gomez2011PRL,Restrepo2014EPL,Restrepo2005PRE,Skardal2012PRE,Skardal2015PRE3}. It is therefore natural to ask: What arrangements of network topologies and local oscillator dynamics {\it optimize} network synchronization? 

In previous work, we addressed this question by developing a framework that introduced the {\it Synchrony Alignment Function} (SAF) \cite{Skardal2014PRL,Skardal2016Chaos}, which is an objective function that quantifies the interplay between complex network structure and local dynamics. The SAF quantifies the alignment of the oscillators' heterogeneities with the network heterogeneity, as manifest in the singular vectors (or eigenvectors in the case of undirected graphs) of the combinatorial graph Laplacian matrix. In section~\ref{subsec:02:02}, we present the generalized version of the SAF \cite{Skardal2016Chaos}, which allows the network to possibly contain directed links. Importantly, the SAF gives rise to a nonnegative scalar that serves as an objective measure of the synchronization properties of an oscillator network and provides an avenue for optimization; it allows for the systematic optimization of synchronization properties for networks of heterogeneous oscillators under a wide range of practical constraints \cite{Skardal2014PRL,Skardal2016Chaos,Taylor2016SIAP,Skardal2016Chaos}. For example, the SAF framework has been used to rank network links according to their importance to synchronization (which for heterogeneous oscillators, depends both on the network and the oscillator frequencies), thereby identifying network modification that judiciously enhance synchronization~\cite{Taylor2016SIAP}. It has been applied to optimize heterogeneous chaotic oscillators and was shown to effectively improve the synchronization of experimental electronic circuits~\cite{Skardal2017Chaos}. The SAF framework has also served useful beyond the goal of optimization---namely, it has been used  to characterized a phenomenon called erosion of synchronization, which arises under  frustrated coupling~\cite{Skardal2015PRE2,Skardal2015PhysD}.

The SAF quantifies the alignment of network heterogeneity with the oscillators' heterogeneity---that is, their different natural frequencies $\{\omega_i\}$---which  requires knowledge of the precise network and the particular frequencies. However, such information can be impossible to obtain for many applications. For example, any attempt to measure the oscillation frequency for a brain region, heart tissue, intestinal tissue, etc. will inevitably produce some measurement error.  Understanding the effects of uncertainty on synchronization, and its optimization, remains an important open problem. Thus motivated, in this paper we relax the assumption that the frequencies are deterministic parameters. Instead, we extend the SAF framework by allowing the oscillators' natural frequencies to be encoded by a multivariate random variable $\bm{\omega}$ with known expectation $\bm{\mu}$ and covariance matrix $\Sigma$.  Using the observation that the SAF is a quadratic form of a nonnegative, symmetric matrix \cite{Rencher2008}, we utilize the extensive literature on  quadratic forms for random vectors  to obtain expressions for the expectation and variance of the SAF. Our results provide theoretical justification for numerical experiments presented in \cite{Taylor2016SIAP} (see figure 6.2(c)), where we found the SAF to effectively optimize systems with moderate uncertainty about the frequencies. We also report new and unexpected findings, for instance that for independent and identically distributed frequencies, uncertainty always increases the SAF, in expectation, thereby inhibiting synchronization.

In the simplest case, each  frequency $\omega_i$ is an identically and independently distributed random variable, implying $\bm{\mu}=\overline{\mu} [1,\dots,1]^T$ and $\Sigma = \sigma^2 I$ where $\overline{\mu}\in\mathbb{R}$ and $\sigma^2\in\mathbb{R}_+$ are constants. This scenario describes when one has information about the network structure and the distribution of frequencies, but no knowledge about how the oscillators' frequencies are arranged across the  network. We use this setting to shed light on another important topic in the field of synchronization: What generic structural and/or dynamical properties of oscillator networks promote or inhibit synchronization? We begin by demonstrating that synchronization is promoted when link weights are delocalized, i.e., when a network has more links that are on average weaker compared to fewer links that are on average stronger. We support this finding through both numerical experiments and analytically, by combining our probabilistic analysis of the SAF with random matrix theory that predicts the spectral properties of the combinatorial graph Laplacian matrix $L$. We find that synchronization is enhanced by a concentration (i.e., lack of heterogeneity) of the spectral density for $L$ (singular values in the case of directed networks and eigenvalues in the case of undirected networks), which establishes a new connection between synchronization theory for heterogeneous oscillators and identical oscillators \cite{Barahona2002PRL,Nishikawa2010PNAS,Pecora1998PRL,Sun2009EPL}. In particular, the eigenratio ---  which is one measure for spectral heterogeneity --- is well known to be important for identical oscillators \cite{Barahona2002PRL}. Here, we find for IID frequencies that the expectation and variance of the SAF  are proportional to the second and fourth moments of the spectral density, respectively, which are two different measures for spectral heterogeneity. We also show that increasing the directedness of a network (that is, the fraction of links that are directed versus undirected) promotes synchronization in a similar way as delocalization. Finally, we consider correlations between the node degrees and the oscillator frequencies, and complement previous research on the effects of these correlations \cite{Brede2008PLA,Papadopoulos2017Chaos,Pinto2015PRE,Skardal2014PRL,Skardal2016Chaos} by exploring how different types of correlations can be used to improve or diminish synchronization.

The remainder of this paper is organized as follows. In \cref{sec:02}, we present background information. In \cref{sec:03}, we present our main analytical results, quantifying the expectation and variance of the SAF using a latent variable approach for natural frequency ensembles. We also demonstrate the accuracy of these results using numerical simulations. In \cref{sec:04}, we apply these new results to identify generic network properties that promote synchronization, including link weight delocalization, directedness, and degree-frequency correlations. In \cref{sec:05}, we conclude with a discussion.

\section{Background Information and a Motivating Numerical Experiment}\label{sec:02}
We first present Kuramoto's  phase oscillator model (\cref{subsec:02:01}), the SAF framework (\cref{subsec:02:02}), and a motivating  experiment (\cref{subsec:02:03}).

\subsection{The Network Kuramoto Model}\label{subsec:02:01}

\begin{definition}[Network Kuramoto Phase Oscillator Model~\cite{Kuramoto}]\label{def:Kuramoto}
Consider a collection of $N$ phase oscillators in which $\theta_n\in[0,2\pi)$ is the phase of oscillator $n$, $\omega_n\in\mathbb{R}$ is the natural frequency of oscillator $n$, the adjacency matrix $A$ encodes the network-coupling of oscillators so that $A_{nm}$ is the weight of the link $m\to n$, $H_{nm} :\mathbb{R}\to\mathbb{R}$ is an interaction-specific, $2\pi$-periodic coupling function that is differentiable at $0$, and $K>0$ is the global coupling strength. The network Kuramoto phase oscillator model \cite{Kuramoto} is given by the following system of first-order nonlinear differential equations:
\begin{align}
\frac{d{\theta}_n}{dt}= \omega_n + K\sum_{m=1}^N A_{nm}H_{nm}(\theta_m-\theta_n),~~~\text{for }n\in\{1,\dots,N\}.\label{eq:02:01}
\end{align}
\end{definition}

Kuramoto's phase oscillator model was originally derived to describe the synchronization of {\it weakly-coupled} limit-cycle oscillators. This refers to oscillators whose steady-state behavior in the absence of coupling are stable limit cycles, and when coupling is added the geometry of the limit cycles are not significantly altered. Moreover, it is often assumed that the oscillator interactions follow a uniform functional form, i.e., $H_{nm}(\theta)=H(\theta)$ for all $n,m=1,\dots,N$. The choice $H(\theta)=\sin(\theta)$ is particularly popular and well representative of the generic case, using the first-order term of a Fourier expansion for an odd function $H(\theta)$, and is often referred to simply as the network Kuramoto model. Of particular importance is the network structure encoded by adjacency matrix $A$. The entries of $A$ indicate the existence and weight of each link in the network, such that $A_{nm}\ge0$ is the weight of the link $m\to n$. (If no link exists, then $A_{nm}=0$.) We assume that the network is weighted, has non-negative links, and no self loops. In the case of a directed network, we define the (weighted) in- and out-degrees by $k_n^{\text{in}}=\sum_{m=1}^NA_{nm}$ and $k_n^{\text{out}}=\sum_{m=1}^NA_{mn}$, respectively. For undirected networks, one has $A_{nm}=A_{mn}$, implying $A^T=A$, and we define $k_n^{\text{in}}=k_n^{\text{out}}=k_n$. We also define the mean degree $\langle k\rangle = (1/N)\sum_{n=1}^Nk_n$.

Next we define the classical Kuramoto order parameter, which is an objective measure of the degree of synchronization in the network Kuramoto model.

\begin{definition}[The Kuramoto Order Parameter~\cite{Kuramoto}]\label{def:OrderParameter}
Consider the network Kuramoto phase oscillator model given in \cref{def:Kuramoto}. The Kuramoto order parameter is the complex number $z\in\mathbb{C}$ given by the centroid of all $N$ oscillator phases $\{\theta_j\}$ when mapped onto the complex unit circle:
\begin{align}
z = \frac{1}{N}\sum_{j=1}^Ne^{i\theta_j},\label{eq:02:02}
\end{align}
where $i=\sqrt{-1}$ is the imaginary unit.
\end{definition}

It is convenient to use the polar decomposition of the order parameter $z=re^{i\psi}$, where $r=|z|\in[0,1]$ is the magnitude of $z$ and $\psi$ is the collective phase of the system. In particular, $r$ is an objective measure of the degree of synchronization since an incoherent state yields $r\approx0$, and a strongly synchronized state yields $r\approx1$.

\subsection{The Synchrony Alignment Function}\label{subsec:02:02}

Next, we summarize the 
SAF framework, which was first introduced in \cite{Skardal2014PRL} for undirected networks and later generalized for directed networks~\cite{Skardal2016Chaos}. To help facilitate the presentation of our main analytical results in section~\ref{sec:03}, we will define the SAF in terms of a quadratic form~\cite{Rencher2008}. The fact that the SAF takes this form has not been noted in previous research.

We begin with the assumption that the coupling function is uniform, $H_{nm}(\theta)=H(\theta)$, and that $H(\theta)$ is smooth and positively-sloped at zero, and $H(0)\ll1$ with a root $H(\theta)=0$ nearby. As shown/discussed in \cite{Skardal2015PRE2}, these assumptions allow for the possibility of a strongly synchronized state when the coupling strength $K$ is for a sufficiently large or when the spread of oscillator frequencies is sufficiently small. We then consider the strongly synchronized regime where phase angles satisfy $|\theta_m-\theta_n|\ll1$, allowing us to linearize \cref{eq:02:01} to obtain the linearized system
\begin{align}
\dot{\theta}_n=\tilde{\omega}_i -KH'(0)\sum_{j=1}^N L_{nm}\theta_m,\label{eq:02:03}
\end{align}
where $\tilde{\omega}_n=\omega_n+KH(0)k_n^{in}$ is the effective natural frequency of oscillator $n$ and $L$ is the combinatorial Laplacian matrix whose entries are given by $L_{nm}=\delta_{nm}k_n^{in}-A_{nm}$. 

The combinatorial graph Laplacian $L$ is a cornerstone for numerous problems in network science, and we will make use of its singular value decomposition. The singular value decomposition of $L$ is given by $L=US V^T$, where $S=\text{diag}(s_1,s_2,\dots,s_N)$, and the columns of $U$ and $V$ are populated by the orthonormal left and right singular vectors $\{\bm{u}^i\}_{i=1}^N$ and $\{\bm{v}^i\}_{i=1}^N$, which satisfy   $L\bm{v}^i=s_i\bm{u}^i$ and $L^T\bm{u}^i=s_i\bm{v}^i$~\cite{Golub}. 
For undirected networks $L$ is symmetric, $L=L^T$, and the singular values and singular vectors are identical to the eigenvalues and eigenvectors of $L$, and so it suffices to consider only the eigenvalue decomposition is required $L = V\Lambda V^T$, where $\Lambda=\text{diag}(\lambda_1,\lambda_2,\dots,\lambda_N)$ and the columns of $V$ are populated by the orthonormal eigenvectors.

The SAF framework utilizes the pseudo-inverse of $L$, which is generally defined by $L^\dagger = VS^\dagger U^T$, where $S^\dagger=\text{diag}(0,s_2^{-1},\dots,s_N^{-1})$~\cite{BenIsrael2003}. For undirected networks one has $L^\dagger = V\Lambda^\dagger V^T$ with $\Lambda^\dagger=\text{diag}(0,\lambda_2^{-1},\dots,\lambda_N^{-1})$.  Note that we explicitly assume there are $N-1$ singular values and nonzero eigenvalues, which occurs, for example, if the network is strongly connected.
Under this assumption, the real part of all other eigenvalues is positive, $\text{Re}(\lambda_i)>0$ for $i=2,\dots,N$ and all other singular values are positive and can be ordered $0=s_1<s_2\le\cdots\le s_N$.

Returning to the dynamics of \cref{eq:02:03},  we search for a phase-locked solution of the form $\theta_n(t)=\theta_n^*+\Omega t$, where $\Omega$ is the collective frequency of the synchronized state. In the case of an undirected network we have $\Omega=\langle\tilde{\bm{\omega}}\rangle$, whereas in the directed case we have $\Omega=(\bm{u}^{1T} \tilde{\omega})/(\bm{u}^{1T}\bm{1})$~\cite{Skardal2015PRE}. In either case, we enter the rotating reference frame $\theta_n\mapsto\theta_n+\Omega t$, and search for the phase-locked state in this rotating frame satisfying $\dot{\theta}_n=0$ for all $n=1,\dots,N$. The solution is given, in vector form, by
\begin{align}
\bm{\theta}^*=\frac{L^\dagger\tilde{\bm{\omega}}}{KH'(0)}.\label{eq:02:04}
\end{align}
Returning to the Kuramoto order parameter in \cref{eq:02:02}, in the strongly synchronized regime we have that
\begin{align}
r\approx1-\frac{\|\bm{\theta}^*\|^2}{2N}=1-\frac{\tilde{\bm{\omega}}^TL^{\dagger T}L^\dagger\tilde{\bm{\omega}}}{2NK^2H'^2(0)}=1-\frac{\tilde{\bm{\omega}}^TU(S^{\dagger})^2U^T\tilde{\bm{\omega}}}{2NK^2H'^2(0)},\label{eq:02:05}
\end{align}
where we have used that $V^TV=I$ and $S^\dagger$ is diagonal so that $(S^\dagger)^TS^\dagger=(S^\dagger)^2=\text{diag}(0,s_2^{-2},\dots,s_N^{-2})$. Equation~\cref{eq:02:05} brings us to the quadratic form definition of the SAF

\begin{definition}[Synchrony Alignment Function (SAF)~\cite{Skardal2014PRL,Skardal2016Chaos}]\label{def:02:03}
Consider the network Kuramoto phase oscillator model given in \cref{def:Kuramoto} with linearized dynamics given by \cref{eq:02:03} with effective natural frequencies $\tilde{\bm{\omega}}$ and Laplacian $L$. The Synchrony Alignment Function is a function $J:\mathbb{R}^N\times\mathbb{M}_{N\times N}\to\mathbb{R}$ defined as
\begin{align}
J(\tilde{\bm{\omega}},L)=\frac{\tilde{\bm{\omega}}^TU(S^{\dagger})^2U^T\tilde{\bm{\omega}}}{N}\label{eq:02:06}
\end{align}
where $(S^\dagger)^2=\text{diag}(0,s_2^{-2},\dots,s_N^{-2})$ is populated by the inverse squares of the nontrivial singular values of $L$ and the columns of $U$ are populated by the left singular vectors $\{\bm{u}^i\}_{i=1}^N$ of $L$.
\end{definition}

For the remainder of this paper, we will refer to \eqref{eq:02:06} simply as the SAF. We emphasize that the SAF is given here as a quadratic form with matrix $U(S^{\dagger})^2U^T$. It appears different from, although is equivalent to, the forms given in previous work~\cite{Skardal2014PRL,Skardal2016Chaos}. In particular, one can alternately write the pseudo-inverse of the Laplacian as $L^\dagger=\sum_{j=2}^N\bm{v}^j\bm{u}^{jT}/s_j$, which leads to the alternate form
\begin{align}
J(\tilde{\bm{\omega}},L)=\frac{1}{N}\sum_{j=2}^N\frac{(\bm{u}^{jT}\tilde{\bm{\omega}})^2}{s_j^2},\label{eq:02:07}
\end{align}
which more explicitly describes the SAF as a weighted combination of projections of the frequency vector $\tilde{\bm{\omega}}$ onto the nontrivial left singular vectors $\bm{u}^j$. However, the quadratic form in \cref{eq:02:06} has some advantages that we will leverage below in our analysis of the expected value and variance of $J(\tilde{\bm{\omega}},L)$ when considering randomized frequencies.

Together, \cref{eq:02:05} and \cref{eq:02:06} yield
\begin{align}
r\approx1-\frac{J(\tilde{\bm{\omega}},L)}{2K^2H'^2(0)},\label{eq:02:08}
\end{align}
thus allowing us to use the SAF as an objective measure for the degree of synchronization in a network. On its own, the SAF quantifies the interplay between a network's dynamics (i.e., through the natural frequencies $\{\omega_i\}$) and its structure (i.e., through Laplacian singular values $\{s_i\}$ and singular vectors $\{\bm{u}_i\}$ and $\{\bm{v}_i\}$). Equation~\cref{eq:02:08} implies that, in the linearized regime where the approximations made to obtain \cref{eq:02:03} are valid, the synchronization properties of a given network are promoted when the SAF $J(\tilde{\bm{\omega}},L)$ is made small. Specifically, by inspecting \cref{eq:02:07}, we can see that this is done by aligning the frequency vector with the singular vectors associated with large singular values and making it orthogonal to the singular vectors associated with small singular values. We also note here that in the typical case of sinusoidal coupling, $H(\theta)=\sin(\theta)$, we have that $H'(0)=1$ and $H(0)=0$, implying $\tilde{\omega}_n=\omega_n$ and simplifying \cref{eq:02:08} to 
\begin{align}
r\approx1-\frac{J(\bm{\omega},L)}{2K^2}.\label{eq:02:09}
\end{align}
In the remaining text, we will assume sinusoidal coupling, $H(\theta)=\sin(\theta)$.

\begin{figure}[htbp]
\centering
\includegraphics[width=0.53\textwidth]{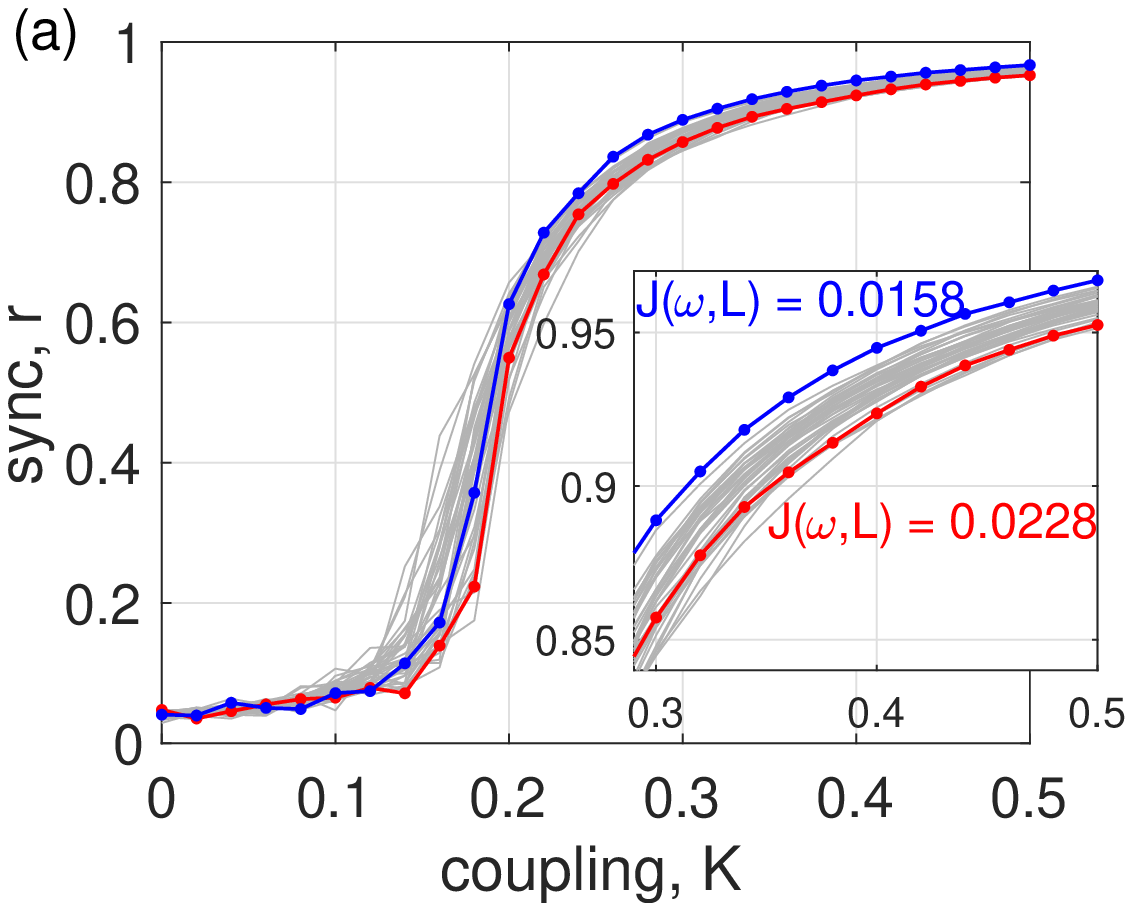}
\includegraphics[width=0.46\textwidth]{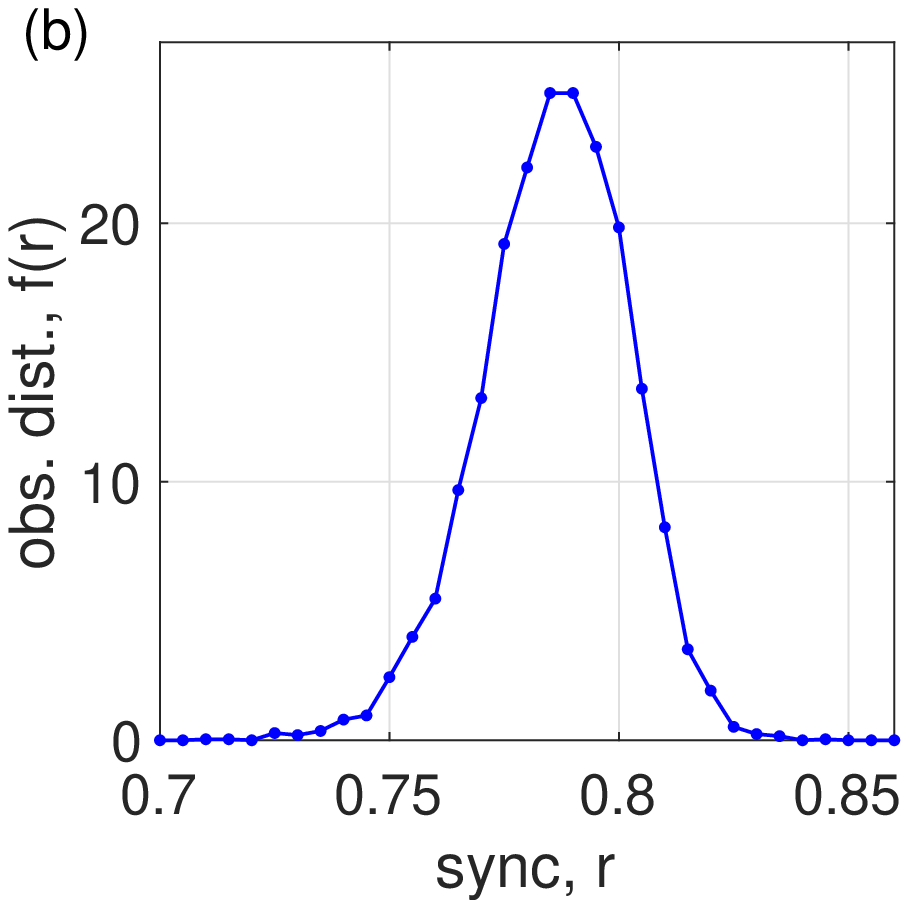}
\caption{{\bf Network Synchronization and the SAF}. (a) Order parameter $r$ versus coupling strength $K$ for the network Kuramoto phase oscillator system given by \cref{eq:02:01} with an ER network of size $N=500$ with mean degree $\langle k \rangle=10$ and $40$ different arrangements (i.e., permutations) of a set of normally distributed natural frequencies. Results are plotted in gray, except for those arrangements minimizing and maximizing $J(\bm{\omega},L)$, which are plotted in blue and red circles, respectively. (b) An empirical distribution $f(r)$ of the order parameter $r$ taken at $K=0.25$ and based on $5000$ arrangements of the same frequencies.}
\label{fig:01}
\end{figure}

\subsection{Motivating Numerical Experiment}\label{subsec:02:03}

The SAF allows us to evaluate the synchronization properties of a specific network with a specific set of natural frequencies. Moreover, \cref{eq:02:08} allows for the efficient optimization of a network's synchronization properties under various constraints. As an illustrative example, we consider a directed Erd\H{o}s-R\'{e}nyi (ER) network~\cite{Erdos1959} of size $N=500$ nodes with a mean degree $\langle k\rangle=10$ and a given set of $N$ natural frequencies $\{\omega_i\}$ drawn identically and independently from a normal distribution with zero mean and unit variance. We consider several different arrangements (i.e., permutations) of the given set of frequencies, a range of coupling strengths $K\in[0,0.5]$ and. For each arrangement and $K$ value, we numerically integrate \cref{eq:02:01} with sinusoidal coupling and a random initial condition   using  Newton's forward-difference method.

In \cref{fig:01} (a), we plot the degree of synchronization given by the order parameter $r$ as a function of the coupling strength $K$ for $40$ different arrangements of the same collection of natural frequencies on the network. To demonstrate the utility of the SAF we find the minimum and maximum value of $J(\bm{\omega},L)$ over these $40$ arrangements (given by $0.0158$ and $0.0228$, respectively), and we  plot the corresponding curves in blue and red, respectively. As predicted by \cref{eq:02:08} and \cref{eq:02:09}, the minimum and maximum values of the SAF correspond to the maximum and minimum values of the order parameter, respectively, in the strongly synchronized regime. This is highlighted in the inset where we zoom-in on the synchronized regime. Note that once synchronization occurs, the precise value of the order parameter varies significantly from one arrangement to the next. To highlight this, in Fig.~\ref{fig:01} (b) we plot an empirically observed distribution of $r$ --- which we denote $f(r)$ --- that is estimated using 5000 random arrangements at $K=0.25$, i.e., near the beginning of the strongly synchronized regime.  Despite the fact that these results come from a single network structure and a single set of natural frequencies, the distribution of observed order parameter values is remarkably wide, with a bulk ranging from roughly $0.74$ to $0.82$. (The sample mean and standard deviation are $0.7861$ and $0.0156$, respectively.)

As illustrated with this example, the SAF framework has proven very useful in both characterizing and optimizing the synchronization properties of coupled oscillator networks with specific information about the network structure and local dynamics, i.e., natural frequencies $\{\omega_i\}$. However, the requirement knowing the precise natural frequencies introduces a hurdle in quantifying and developing an understanding of the overall synchronization properties of, for instance, a network structure without specific knowledge of the natural frequencies. Instead, some statistical properties of the natural frequencies in a network are more likely attainable, in which case a description of the expected value of the SAF and the likely size of any deviations, i.e., its variance, is a more realistic outcome. Moreover, this approach lends itself better to providing a broader understanding of the network properties that promote or inhibit synchronization. To this end, in the following section we will develop theory for the expectation and variance of the SAF
by treating the natural frequencies as latent random variables.

\section{Analysis of the SAF under Uncertain Frequencies}\label{sec:03}

We will now study the behavior of the SAF in a new context where natural frequencies are not determined, but rather are random variables. We consider ensembles of natural frequency vectors and describe the expectation \drt{(\cref{sec:3:1})} and variance \drt{(\cref{sec:3:2})} of the SAF. In (\cref{sec:3.3}), we provide numerical support to validate these results.

Before continuing, we emphasize that we will still treat the network structure as fixed, and therefore the quantities that result from the network structure (e.g., the Laplacian $L$, singular vectors $\bm{u}^i$ and $\bm{v}^i$, singular values $s_i$, etc.) are also fixed. On the other hand, from this point forward, we will treat each natural frequency $\omega_i$ as a random variable with mean $\mu_i$ and variance $\sigma_i^2$. The vector $\bm{\omega}=[\omega_1,\dots,\omega_N]^T$ is then a vector of random variables with mean vector $\bm{\mu}=[\mu_1,\dots,\mu_N]^T$ and covariance matrix $\Sigma$ (whose diagonal entries are the variances $\sigma_i^2$ and whose off-diagonal entries are the covariances). Therefore, $\bm{\mu}$ and $\Sigma$ may be interpreted as latent variables for the natural frequencies $\bm{\omega}$.

\subsection{Expectation of the SAF under Uncertain Frequencies}\label{sec:3:1}

We first study the expectation $\mathbb{E}[J(\bm{\omega},L)]$ of the SAF, and we will present several results for different assumptions about $\bm{\mu}$ and $\Sigma$. Our analysis leverages the observation that the SAF given in \cref{def:02:03} is a quadratic form, allowing us to utilize known results for the quadratic forms of random vectors.

\begin{theorem}[Expectation for Quadratic Forms of Random Vectors \cite{Rencher2008}]\label{th:03:01}
Consider a random vector $\bm{y} = [y_1,\dots,y_N]^T$ in which $\bm{ \mu}=[\mu_1,\dots,\mu_N]^T$ encodes the entries' means and $\Sigma$ encodes the covariances between entries. For any real-valued square, symmetric matrix $X$, the mean of the quadratic form $\bm{y}^TX\bm{y}$ is given by
\begin{align}
\mathbb{E}\left[\bm{y}^TX\bm{y} \right]  &= \mathrm{Tr}[X\Sigma] +  \bm{ \mu}^T X \bm{ \mu} , \label{eq:03:01}
\end{align}
where $\mathrm{Tr}[\cdot]$ denotes the matrix trace.
\end{theorem}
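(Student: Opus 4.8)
The plan is to prove the identity by rewriting the scalar quadratic form as a trace and then exploiting linearity of expectation together with the definition of the covariance matrix. The key observation is that $\bm{y}^T X \bm{y}$ is a scalar, hence equals its own trace, and by the cyclic property of the trace we may write $\bm{y}^T X \bm{y} = \mathrm{Tr}[\bm{y}^T X \bm{y}] = \mathrm{Tr}[X \bm{y}\bm{y}^T]$. Taking expectations of both sides and using that the trace is linear and the entries of $X$ are deterministic, expectation commutes with the trace, giving $\mathbb{E}[\bm{y}^T X \bm{y}] = \mathrm{Tr}[X\, \mathbb{E}[\bm{y}\bm{y}^T]]$. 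The remaining ingredient is the elementary identity $\mathbb{E}[\bm{y}\bm{y}^T] = \Sigma + \bm{\mu}\bm{\mu}^T$, which holds entrywise since $\mathbb{E}[y_i y_j] = \mathrm{Cov}(y_i,y_j) + \mathbb{E}[y_i]\mathbb{E}[y_j] = \Sigma_{ij} + \mu_i \mu_j$. Substituting and applying linearity of the trace once more yields $\mathbb{E}[\bm{y}^T X \bm{y}] = \mathrm{Tr}[X\Sigma] + \mathrm{Tr}[X\bm{\mu}\bm{\mu}^T]$, and since $\mathrm{Tr}[X\bm{\mu}\bm{\mu}^T] = \mathrm{Tr}[\bm{\mu}^T X \bm{\mu}] = \bm{\mu}^T X \bm{\mu}$ by the cyclic property again, we arrive at \cref{eq:03:01}.

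An equivalent route, which I would present if a more elementary exposition is preferred, avoids trace manipulations by working directly with indices: expand $\bm{y}^T X \bm{y} = \sum_{i,j} X_{ij} y_i y_j$, apply linearity of expectation to get $\sum_{i,j} X_{ij} \mathbb{E}[y_i y_j]$, substitute $\mathbb{E}[y_i y_j] = \Sigma_{ij} + \mu_i \mu_j$, and split the double sum. One then identifies $\sum_{i,j} X_{ij}\Sigma_{ij} = \mathrm{Tr}[X\Sigma]$ (using that $\Sigma$ is symmetric, so $\Sigma_{ij}=\Sigma_{ji}$) and $\sum_{i,j} X_{ij}\mu_i\mu_j = \bm{\mu}^T X \bm{\mu}$, recovering the stated formula.

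There is essentially no hard step here, so rather than an obstacle I would flag only two minor points for clarity: first, the identity as stated does not actually require $X$ to be symmetric — symmetry of the covariance matrix $\Sigma$ is all that the trace/index computation uses — although the hypothesis on $X$ is harmless and is genuinely needed for the companion variance result; second, the derivation presumes only that the second moments $\mathbb{E}[y_i y_j]$ are finite, which is implicit in assuming that $\Sigma$ is well-defined, and this should be noted for completeness.
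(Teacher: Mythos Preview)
Your proposal is correct, and the index-based route you describe as the ``equivalent route'' is precisely the argument the paper gives in its appendix: expand $\bm{y}^T X\bm{y}=\sum_{i,j}X_{ij}y_iy_j$, apply linearity, substitute $\mathbb{E}[y_iy_j]=\Sigma_{ij}+\mu_i\mu_j$, and identify the two resulting sums as $\mathrm{Tr}[X\Sigma]$ and $\bm{\mu}^TX\bm{\mu}$. Your trace-trick formulation is a cosmetic repackaging of the same computation, and your side remarks about the unnecessary symmetry hypothesis on $X$ and the implicit finiteness of second moments are accurate.
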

\begin{proof}
This result is well known 
and is given as Theorem 5.2a in \cite{Rencher2008}. To provide the reader insight, we include a short proof in \cref{app:A}.
\end{proof}

We   now utilize \cref{th:03:01} to find the expectation of the SAF for a number of different choices of random frequency vectors. First we consider the SAF for general frequency vectors $\bm{\omega}$ with given mean vector $\bm{\mu}$ and covariance matrix $\Sigma$.

\begin{theorem}[Expected SAF with Random Frequencies]\label{th:03:02}
Consider the network Kuramoto phase oscillator model given in \eqref{def:Kuramoto} with linearized dynamics given by \cref{eq:02:04} for a Laplacian matrix $L$ and natural frequencies that are drawn as a random vector $\bm{\omega}$ in which the entries' means are given by $\bm{ \mu}=[\mu_1,\dots,\mu_N]^T$ and their covariances are given by $\Sigma$. The expected value of the SAF is given by
\begin{align}
\mathbb{E}[J(\bm{\omega},L)]=J(\bm{\mu},L)  + \frac{1}{N}\mathrm{Tr}[U(S^{\dagger})^2U^T \Sigma] .\label{eq:03:02}
\end{align}
\end{theorem}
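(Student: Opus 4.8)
The plan is to apply Theorem~\ref{th:03:01} directly to the SAF, since the SAF is exactly a (scaled) quadratic form $\bm{\omega}^T X \bm{\omega}$ with the symmetric matrix $X = U(S^{\dagger})^2 U^T / N$. First I would observe that $X$ is indeed real, square, and symmetric: by construction $U(S^{\dagger})^2U^T$ is symmetric because $(S^{\dagger})^2$ is diagonal (hence symmetric) and it is sandwiched in the form $U M U^T$; alternatively, $U(S^{\dagger})^2U^T = L^{\dagger T} L^{\dagger}$ from \cref{eq:02:05}, which is manifestly of the form $B^T B$ and therefore symmetric (and positive semidefinite, which is consistent with the SAF being nonnegative). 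This symmetry is the only hypothesis of \cref{th:03:01} that needs checking, and it is the place where the ``quadratic form'' reformulation of the SAF in \cref{def:02:03} pays off — the reader should be pointed to the fact that without writing the SAF in the symmetric quadratic form \cref{eq:02:06}, this step would not be available.

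Next I would invoke \cref{th:03:01} with $\bm{y} = \bm{\omega}$, mean vector $\bm{\mu}$, covariance $\Sigma$, and $X = U(S^{\dagger})^2U^T$, to get
\begin{align}
\mathbb{E}\left[\bm{\omega}^T U(S^{\dagger})^2 U^T \bm{\omega}\right] = \mathrm{Tr}\left[U(S^{\dagger})^2 U^T \Sigma\right] + \bm{\mu}^T U(S^{\dagger})^2 U^T \bm{\mu}.\label{eq:plan:01}
\end{align}
Dividing through by $N$ and using linearity of expectation, $\mathbb{E}[J(\bm{\omega},L)] = \tfrac{1}{N}\mathbb{E}[\bm{\omega}^T U(S^{\dagger})^2 U^T \bm{\omega}]$, the second term on the right of \cref{eq:plan:01} divided by $N$ is precisely $J(\bm{\mu},L)$ by \cref{def:02:03} (read with $\tilde{\bm{\omega}}$ replaced by the deterministic vector $\bm{\mu}$, which is legitimate since we are in the sinusoidal-coupling regime where $\tilde{\bm{\omega}} = \bm{\omega}$), and the first term divided by $N$ is exactly $\tfrac{1}{N}\mathrm{Tr}[U(S^{\dagger})^2U^T\Sigma]$. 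Assembling these gives \cref{eq:03:02}.

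I do not anticipate a genuine obstacle here: the result is essentially a one-line corollary of \cref{th:03:01} once the SAF is recognized as a quadratic form. The only subtlety worth being careful about — and the ``hard part'' only in the sense of bookkeeping — is making sure the identification of the deterministic term with $J(\bm{\mu},L)$ is clean, i.e., that the matrix in the quadratic form is the same $U(S^{\dagger})^2U^T$ appearing in \cref{def:02:03}, so that $\bm{\mu}^T U(S^{\dagger})^2 U^T \bm{\mu}/N = J(\bm{\mu},L)$ literally by definition rather than up to some unexamined factor. A secondary point to state explicitly is that the network structure (hence $U$, $S^{\dagger}$, $L$) is held fixed and only $\bm{\omega}$ is random, so that $U(S^{\dagger})^2U^T$ may be treated as a constant matrix when passing the expectation inside — this is exactly the setting assumed at the top of \cref{sec:03}.
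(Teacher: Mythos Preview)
Your proposal is correct and follows exactly the paper's approach: the paper's proof is a one-line application of \cref{th:03:01} with $X=N^{-1}U(S^\dagger)^2U^T$ and $\bm{y}=\bm{\omega}$, and your write-up simply fleshes out the symmetry check and the identification $\bm{\mu}^T U(S^{\dagger})^2 U^T \bm{\mu}/N = J(\bm{\mu},L)$ that the paper leaves implicit.
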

\begin{proof}
Considering the quadratic form in \cref{eq:02:06}, we let $X=N^{-1}U(S^\dagger)^2U^T$ and $\bm{y}=\bm{\omega}$, which, when substituted into \cref{eq:03:01}, simplifies to \cref{eq:03:02}.
\end{proof}

The general form of the expected value of the SAF presented in \cref{th:03:02} deserves a number of remarks. Overall, the expectation of the SAF under random frequencies can be viewed as a perturbation of the SAF computed using the expected frequencies, $J(\bm{\mu},L)$. Thus, in the limit as the variance of each natural frequency approaches zero, we recover $\mathbb{E}[J(\bm{\omega},L)]=J(\bm{\mu},L)$. Next we present several corollaries of \cref{th:03:02} that consider the specific cases of natural frequency vector with equal means, independent natural frequencies with equal variances, and finally independently and identically distributed (IID) natural frequencies.

\begin{corollary}[Expected SAF for Zero-Mean Random Frequencies]\label{cor:03:03}
In the case where the natural frequencies have  zero mean, $\bm{\mu}=\bm{0}$, then the expected value of the SAF is given by
\begin{align}
\mathbb{E}[J(\bm{\omega},L)] &=\frac{1}{N}\mathrm{Tr}[U(S^{\dagger})^2U^T \Sigma].\label{eq:03:03}
\end{align}
\end{corollary}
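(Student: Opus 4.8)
The plan is immediate: Corollary~\ref{cor:03:03} is the special case $\bm{\mu} = \bm{0}$ of Theorem~\ref{th:03:02}, so I would simply substitute $\bm{\mu} = \bm{0}$ into \cref{eq:03:02}. The second term $\frac{1}{N}\mathrm{Tr}[U(S^{\dagger})^2U^T\Sigma]$ is unaffected, while the first term becomes $J(\bm{0},L)$, which by the quadratic-form definition \cref{eq:02:06} evaluates to $\frac{1}{N}\bm{0}^T U(S^\dagger)^2 U^T \bm{0} = 0$. This yields \cref{eq:03:03} directly.

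For completeness I would note that the same conclusion follows from invoking Theorem~\ref{th:03:01} directly with $X = N^{-1} U(S^\dagger)^2 U^T$ and $\bm{y} = \bm{\omega}$: the mean term $\bm{\mu}^T X \bm{\mu}$ vanishes identically because $\bm{\mu} = \bm{0}$, leaving only the trace term. No further argument is needed.

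Honestly, there is no obstacle here — the corollary is a one-line specialization of the preceding theorem, included presumably to set up the subsequent corollaries (equal variances, then the IID case) where the trace term simplifies further. The only thing to be careful about is not to over-explain: I would keep the proof to a single sentence pointing to Theorem~\ref{th:03:02} with $\bm{\mu}=\bm{0}$.

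\begin{proof}
This follows immediately from \cref{th:03:02} by setting $\bm{\mu} = \bm{0}$, which yields $J(\bm{\mu},L) = N^{-1}\bm{0}^T U(S^{\dagger})^2 U^T \bm{0} = 0$ in \cref{eq:03:02}.
\end{proof}
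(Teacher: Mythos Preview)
Your proof is correct and matches the paper's own argument: both simply invoke \cref{th:03:02} with $\bm{\mu}=\bm{0}$ and note that $J(\bm{0},L)=0$.
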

\begin{proof}
The result is immediate  using $J({\bf 0},L)=0$.
\end{proof}

\begin{corollary}[Expected SAF for Independent Random Frequencies of Equal Variance]\label{cor:03:04}
In the case where the natural frequencies are uncorrelated and have the same variance, $\Sigma = \mathrm{Var}(\omega)I_N$, where $I_N$ is an identity matrix of size $N$, then the expected value of the SAF is given by
\begin{align}
\mathbb{E}[J(\bm{\omega},L)]=J(\bm{\mu},L) +  \frac{\mathrm{Var}(\omega)}{N} \sum_{j=2}^N \frac{1}{s_j^2},\label{eq:03:05}
\end{align}
where $s_j$ is the $j^{\text{th}}$ singular value of the Laplacian $L$.
\end{corollary}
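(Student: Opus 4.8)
The plan is to obtain this simply as a specialization of \cref{th:03:02}. Starting from \cref{eq:03:02}, I would substitute the assumed covariance structure $\Sigma = \mathrm{Var}(\omega)I_N$ into the trace term and pull the scalar out, giving $\frac{1}{N}\mathrm{Tr}[U(S^\dagger)^2U^T\Sigma] = \frac{\mathrm{Var}(\omega)}{N}\mathrm{Tr}[U(S^\dagger)^2U^T]$.

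Next I would invoke cyclic invariance of the trace together with orthonormality of the left singular vectors. Since $U$ is orthogonal, $U^TU = I_N$ (its columns $\{\bm{u}^i\}_{i=1}^N$ form an orthonormal basis), and so $\mathrm{Tr}[U(S^\dagger)^2U^T] = \mathrm{Tr}[(S^\dagger)^2U^TU] = \mathrm{Tr}[(S^\dagger)^2]$. Finally, because $(S^\dagger)^2$ is the diagonal matrix with entries $0, s_2^{-2}, \dots, s_N^{-2}$ as in \cref{def:02:03}, its trace is $\sum_{j=2}^N s_j^{-2}$. Combining these identities with $\mathbb{E}[J(\bm{\omega},L)] = J(\bm{\mu},L) + \frac{1}{N}\mathrm{Tr}[U(S^\dagger)^2U^T\Sigma]$ yields \cref{eq:03:05}.

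Since every step is an elementary manipulation, I do not expect a genuine obstacle; the only point requiring a moment's care is that $U$ is genuinely orthogonal (i.e., the full, not thin, SVD is intended), which is already built into the notation of \cref{def:02:03}. As a sanity check one could instead argue directly from \cref{eq:02:07}: with uncorrelated, equal-variance frequencies the scalar $\bm{u}^{jT}\bm{\omega}$ has mean $\bm{u}^{jT}\bm{\mu}$ and variance $\bm{u}^{jT}\Sigma\bm{u}^j = \mathrm{Var}(\omega)\|\bm{u}^j\|^2 = \mathrm{Var}(\omega)$, so $\mathbb{E}[(\bm{u}^{jT}\bm{\omega})^2] = (\bm{u}^{jT}\bm{\mu})^2 + \mathrm{Var}(\omega)$; dividing by $N s_j^2$ and summing over $j = 2,\dots,N$ reproduces the claim.
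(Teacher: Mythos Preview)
Your argument is correct and, like the paper, proceeds by specializing \cref{th:03:02} with $\Sigma=\mathrm{Var}(\omega)I_N$; the only difference is in how the trace is simplified. The paper rewrites $U(S^\dagger)^2U^T=(L^\dagger)^TL^\dagger$ and then appeals to the fact that the trace equals the sum of the eigenvalues of $(L^\dagger)^TL^\dagger$, which it identifies as $0,s_2^{-2},\dots,s_N^{-2}$. Your route---cyclic invariance plus $U^TU=I_N$ to reduce directly to $\mathrm{Tr}[(S^\dagger)^2]$---is more elementary and avoids the detour through the eigenvalues of $(L^\dagger)^TL^\dagger$. Both yield the same sum, and your alternative check via \cref{eq:02:07} is a nice independent verification.
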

\begin{proof}
Using the definitions of $\Sigma$ and trace, and noting that
\begin{align}
U(S^{\dagger})^2U^T=U(S^{\dagger})^TV^TVS^\dagger U^T=(L^\dagger)^TL^\dagger,\label{eq:03:06}
\end{align}
we have that
\begin{align}
\textrm{Tr}[U(S^{\dagger})^2U^T \Sigma]&=\textrm{Tr}[(L^\dagger)^TL^\dagger \Sigma]  
= \mathrm{Var}(\omega)\textrm{Tr}[(L^\dagger)^TL^\dagger]  
= \mathrm{Var}(\omega)\sum_{j=1}^N  \hat{\lambda}_j, \label{eq:03:07}
\end{align}
where we have used that the trace of a matrix equals the sum of its eigenvalues and we denote the eigenvalues of $(L^\dagger)^TL^\dagger$ as $\{\hat{\lambda}_i\}_{i=1}^N$. Since $L^\dagger$ is the pseudo-inverse of the Laplacian $L$ we have that $\hat{\lambda}_1=0$ and $\hat{\lambda}_i=s_i^{-2}$ for $i=2,\dots,N$, which completes the proof.
\end{proof}

\begin{corollary}[Expected SAF for IID Frequencies]\label{cor:03:05}
In the case where the natural frequencies are identically and independently distributed with zero mean, then the expected value of the SAF is given by
\begin{align}
\mathbb{E}[J(\bm{\omega},L)] &=\frac{\mathrm{Var}(\omega)}{N} \sum_{j=2}^N \frac{1}{s_j^2}.\label{eq:03:08}
\end{align}
\end{corollary}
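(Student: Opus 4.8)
The plan is to obtain \cref{cor:03:05} as the common specialization of \cref{cor:03:03} and \cref{cor:03:04}. The hypothesis ``identically and independently distributed with zero mean'' means precisely that $\bm{\mu}=\bm{0}$ and $\Sigma=\mathrm{Var}(\omega)I_N$, so both prior corollaries apply simultaneously.

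First I would invoke \cref{cor:03:04} with the IID assumption: since the frequencies are uncorrelated with common variance $\mathrm{Var}(\omega)$, equation \cref{eq:03:05} gives
\begin{align}
\mathbb{E}[J(\bm{\omega},L)]=J(\bm{\mu},L) + \frac{\mathrm{Var}(\omega)}{N}\sum_{j=2}^N\frac{1}{s_j^2}.\notag
\end{align}
Next I would use the zero-mean assumption, $\bm{\mu}=\bm{0}$: by \cref{cor:03:03} (or equivalently by the observation $J(\bm{0},L)=0$, which follows immediately from \cref{def:02:03} since the quadratic form vanishes at the zero vector), the first term drops out, leaving exactly \cref{eq:03:08}. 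Alternatively one could start from \cref{cor:03:03}, which gives $\mathbb{E}[J(\bm{\omega},L)]=N^{-1}\mathrm{Tr}[U(S^\dagger)^2U^T\Sigma]$, and then substitute $\Sigma=\mathrm{Var}(\omega)I_N$ together with the computation $\mathrm{Tr}[U(S^\dagger)^2U^T]=\sum_{j=2}^N s_j^{-2}$ that was already carried out in the proof of \cref{cor:03:04} via \cref{eq:03:06}--\cref{eq:03:07}.

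There is essentially no obstacle here: the statement is a corollary of corollaries, and the only thing to verify is that the two hypotheses of \cref{cor:03:03} and \cref{cor:03:04} together are implied by ``IID with zero mean.'' The one point worth a sentence of care is making explicit that IID with zero mean forces $\Sigma$ to be a scalar multiple of the identity (independence kills the off-diagonal entries, identical distribution makes the diagonal entries equal), so that \cref{cor:03:04} is indeed applicable; this is routine. I would therefore write the proof as a two-line appeal to the preceding two corollaries.
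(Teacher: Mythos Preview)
Your proposal is correct and matches the paper's own proof, which simply states that the result follows directly from applying \cref{cor:03:03} and \cref{cor:03:04}. Your additional remarks spelling out why the IID-with-zero-mean hypothesis yields $\bm{\mu}=\bm{0}$ and $\Sigma=\mathrm{Var}(\omega)I_N$ are accurate and appropriate elaboration of that one-line argument.
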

\begin{proof}
The result follows directly from applying \cref{cor:03:03} and \cref{cor:03:04}.
\end{proof}

Note that because $\mathrm{Var}(\omega)>0 $ and $ \sum_{j=2}^N  {s_j^{-2}}>0$, if they are IID then the uncertainty for the natural frequencies, always increases the SAF in expectation, thereby inhibiting synchronization.
 
\subsection{Variance of the SAF for Normally Distributed Frequencies}\label{sec:3:2}

We now study the variance of the SAF, $\mathrm{Var}[J(\bm{\omega},L)]=\mathbb{E}[J^2(\bm{\omega},L)]-\mathbb{E}[J(\bm{\omega},L)]^2$. Similar to the previous subsection, we will present several results that are obtained under different assumptions about $\bm{\mu}$ and $\Sigma$. We will again leverage the quadratic form of the SAF given in \cref{def:02:03}, now with a theorem describing the variance of a quadratic forms of random vectors. In order to obtain analytical expressions, we now restrict our attention to the case of normally distributed natural frequencies. 

\begin{theorem}[Variance for Quadratic Forms of Normal-Distributed Random Vectors \cite{Rencher2008}]\label{th:03:06}
Consider the random vector $\bm{y}$ and matrix $X$ from Thm.~\ref{th:03:01}, and further assume $\bm{y}\sim \mathcal{N}(\bm{\mu},\Sigma)$ is drawn from a multivariate Gaussian distribution. Then the variance of the quadratic form $\bm{y}^TX\bm{y}$ is given by
\begin{align}
\mathrm{Var}\left[\bm{y}^TX\bm{y}\right]  &= 2 \mathrm{Tr}[ (X \Sigma )^2] +   4 \bm{ \mu}^T X\Sigma X \bm{ \mu}.\label{eq:03:09}
\end{align}
\end{theorem}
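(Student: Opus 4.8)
The plan is to prove this standard result about the variance of a quadratic form in a multivariate Gaussian by reducing to the mean via the identity $\mathrm{Var}[Q] = \mathbb{E}[Q^2] - (\mathbb{E}[Q])^2$, where $Q = \bm{y}^T X \bm{y}$. First I would write $\bm{y} = \bm{\mu} + \bm{z}$ with $\bm{z} \sim \mathcal{N}(\bm{0},\Sigma)$, so that $Q = \bm{\mu}^T X \bm{\mu} + 2\bm{\mu}^T X \bm{z} + \bm{z}^T X \bm{z}$. Expanding $Q^2$ produces six terms (after collecting), and taking expectations kills every term containing an odd total power of $\bm{z}$, since odd moments of a centered Gaussian vanish. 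What survives is $\mathbb{E}[Q^2] = (\bm{\mu}^T X\bm{\mu})^2 + 4\,\mathbb{E}[(\bm{\mu}^T X\bm{z})^2] + 2(\bm{\mu}^T X\bm{\mu})\,\mathbb{E}[\bm{z}^T X \bm{z}] + \mathbb{E}[(\bm{z}^T X\bm{z})^2]$. Subtracting $(\mathbb{E}[Q])^2 = (\bm{\mu}^T X\bm{\mu})^2 + 2(\bm{\mu}^T X\bm{\mu})\mathrm{Tr}[X\Sigma] + \mathrm{Tr}[X\Sigma]^2$ (using \cref{th:03:01}), the first three survivors cancel cleanly against pieces of $(\mathbb{E}[Q])^2$: indeed $\mathbb{E}[\bm{z}^T X\bm{z}] = \mathrm{Tr}[X\Sigma]$, so the cross terms match exactly, leaving $\mathrm{Var}[Q] = 4\,\mathbb{E}[(\bm{\mu}^T X\bm{z})^2] + \mathbb{E}[(\bm{z}^T X\bm{z})^2] - \mathrm{Tr}[X\Sigma]^2$.

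The two remaining pieces are handled separately. For the linear-form term, $\bm{\mu}^T X\bm{z}$ is a scalar Gaussian with mean $0$ and variance $\mathbb{E}[\bm{\mu}^T X\bm{z}\,\bm{z}^T X\bm{\mu}] = \bm{\mu}^T X\Sigma X\bm{\mu}$, which immediately gives the $4\bm{\mu}^T X\Sigma X\bm{\mu}$ term in \cref{eq:03:09}. For the purely quadratic term $\mathbb{E}[(\bm{z}^T X\bm{z})^2]$, the cleanest route is to whiten: write $\Sigma = \Sigma^{1/2}\Sigma^{1/2}$ and set $\bm{w} = \Sigma^{-1/2}\bm{z} \sim \mathcal{N}(\bm{0}, I)$ (handling a singular $\Sigma$ by restricting to its range, which does not affect the trace identities), so that $\bm{z}^T X\bm{z} = \bm{w}^T M\bm{w}$ with $M = \Sigma^{1/2} X \Sigma^{1/2}$ symmetric. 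Then $\mathbb{E}[(\bm{w}^T M\bm{w})^2] = \sum_{i,j,k,l} M_{ij}M_{kl}\,\mathbb{E}[w_i w_j w_k w_l]$, and Isserlis/Wick's theorem for standard normals gives $\mathbb{E}[w_i w_j w_k w_l] = \delta_{ij}\delta_{kl} + \delta_{ik}\delta_{jl} + \delta_{il}\delta_{jk}$; summing yields $(\mathrm{Tr}\,M)^2 + 2\,\mathrm{Tr}[M^2]$. Since $\mathrm{Tr}\,M = \mathrm{Tr}[X\Sigma]$ and $\mathrm{Tr}[M^2] = \mathrm{Tr}[(\Sigma^{1/2}X\Sigma^{1/2})^2] = \mathrm{Tr}[(X\Sigma)^2]$ by cyclicity of the trace, we get $\mathbb{E}[(\bm{z}^T X\bm{z})^2] = \mathrm{Tr}[X\Sigma]^2 + 2\,\mathrm{Tr}[(X\Sigma)^2]$. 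Substituting back, the $\mathrm{Tr}[X\Sigma]^2$ terms cancel and \cref{eq:03:09} follows.

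The only genuinely delicate point is the fourth-moment (Isserlis) computation for the quadratic term; everything else is bookkeeping with Gaussian odd moments vanishing and the trace identity from \cref{th:03:01}. I would either invoke Wick's theorem directly or, to keep the write-up self-contained, note that for a single standard normal $\mathbb{E}[w^4]=3$ and for independent coordinates $\mathbb{E}[w_i^2 w_j^2]=1$ when $i\neq j$, which is exactly what the $\delta$-expansion encodes. As with \cref{th:03:01}, since this is Theorem 5.2a's companion in \cite{Rencher2008}, it suffices to cite it and defer the short self-contained proof sketched above to \cref{app:A}.
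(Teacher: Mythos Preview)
Your proof is correct, but the paper itself does not actually prove this theorem: it simply cites Theorem~5.2c of \cite{Rencher2008} and notes that the derivation there proceeds via the moment generating function of $\bm{y}^TX\bm{y}$. Your argument is thus more detailed than the paper's and takes a genuinely different route. Where the MGF approach computes $M(t)=\mathbb{E}[e^{t\bm{y}^TX\bm{y}}]$ in closed form (using the Gaussian integral) and reads off moments by differentiating at $t=0$, you instead decompose $\bm{y}=\bm{\mu}+\bm{z}$, exploit the vanishing of odd Gaussian moments to prune the expansion of $Q^2$, and then handle the surviving fourth-moment term by whitening and Isserlis' theorem. Your approach is more elementary and self-contained (no generating-function machinery, just pairwise contractions of standard normals), while the MGF route is more systematic and extends cleanly to higher cumulants. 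Either is perfectly adequate here; your write-up would in fact be a nice complement to \cref{app:A}, which already gives the analogous direct proof of \cref{th:03:01}.
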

\begin{proof}
See  Theorem 5.2c of \cite{Rencher2008} for a derivation the uses the moment generating function for $\bm{y}^TX\bm{y}$. 
\end{proof}

We will now utilize \cref{th:03:06} to evaluate the variance of the SAF for a number of different choices of random frequency vectors. First we consider the SAF for general frequency vectors $\bm{\omega}$ with given mean vector $\bm{\mu}$ and covariance matrix $\Sigma$.

\begin{theorem}[Variance of SAF for Normal-Distributed Frequencies]\label{th:03:07}
Consider the network of oscillators with random frequencies described in \cref{th:03:02} and further assume the frequency vector $\bm{\omega}\sim \mathcal{N}(\bm{\mu},\Sigma)$ is drawn from a multivariate Gaussian distribution with mean $\bm{\mu}$ and variance $\Sigma$. Then the variance of the SAF $J(\bm{\omega},L)$ is given by
\begin{align}
\mathrm{Var}\left[J(\bm{\omega},L)\right] 
&= \frac{1}{N^2}\left[2 \mathrm{Tr}[ [U(S^{\dagger})^2U^T \Sigma]^2] +   4\bm{ \mu}^T U(S^{\dagger})^2U^T \Sigma U(S^{\dagger})^2U^T \bm{\mu}  \right].\label{eq:03:10}
\end{align}
\end{theorem}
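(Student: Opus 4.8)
The plan is to apply Theorem~\ref{th:03:06} directly, exactly as Theorem~\ref{th:03:02} applied Theorem~\ref{th:03:01}. The SAF in \cref{eq:02:06} is the quadratic form $\bm{\omega}^T X \bm{\omega}$ with the symmetric matrix $X = N^{-1} U(S^{\dagger})^2 U^T$ and random vector $\bm{y} = \bm{\omega} \sim \mathcal{N}(\bm{\mu},\Sigma)$. So first I would verify the hypotheses of Theorem~\ref{th:03:06} are met: $X$ is real, square, and symmetric (it equals $(L^\dagger)^T L^\dagger$ by \cref{eq:03:06}, manifestly symmetric), and $\bm{\omega}$ is assumed multivariate Gaussian by hypothesis. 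Then I would substitute $X = N^{-1} U(S^{\dagger})^2 U^T$ into \cref{eq:03:09}.

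The second step is just bookkeeping with the scalar $N^{-1}$. Since $X$ carries one factor of $1/N$, the term $\mathrm{Tr}[(X\Sigma)^2]$ carries $1/N^2$ and the term $\bm{\mu}^T X \Sigma X \bm{\mu}$ also carries $1/N^2$; factoring $1/N^2$ out front and writing $NX = U(S^{\dagger})^2 U^T$ inside gives precisely the bracketed expression in \cref{eq:03:10}. That is,
\begin{align}
\mathrm{Var}[J(\bm{\omega},L)] = 2\,\mathrm{Tr}\!\left[\left(\tfrac{1}{N}U(S^{\dagger})^2U^T \Sigma\right)^2\right] + 4\,\bm{\mu}^T \left(\tfrac{1}{N}U(S^{\dagger})^2U^T\right) \Sigma \left(\tfrac{1}{N}U(S^{\dagger})^2U^T\right)\bm{\mu},
\end{align}
which rearranges to the claimed formula.

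There is essentially no obstacle here: the theorem is an immediate specialization, and the only thing to be careful about is that Theorem~\ref{th:03:06} requires the full Gaussian assumption (unlike the expectation formula, which needed only the first two moments), which is exactly why the statement of Theorem~\ref{th:03:07} restricts to $\bm{\omega} \sim \mathcal{N}(\bm{\mu},\Sigma)$. If anything merits a sentence of comment, it is that this Gaussian restriction is genuinely needed because the variance of a quadratic form depends on the third and fourth moments of $\bm{\omega}$, which are pinned down by $\bm{\mu}$ and $\Sigma$ only under normality. The proof will therefore be two lines: invoke Theorem~\ref{th:03:06} with $X = N^{-1}U(S^{\dagger})^2U^T$ and collect the powers of $N$.
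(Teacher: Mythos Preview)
Your proposal is correct and matches the paper's own proof essentially line for line: the paper also invokes \cref{th:03:06} with $X = N^{-1}U(S^{\dagger})^2U^T$ and $\bm{y} = \bm{\omega}$, and simply reads off \cref{eq:03:10}. Your added remarks on symmetry of $X$ and on why the Gaussian hypothesis is required go slightly beyond what the paper writes, but the argument is the same direct specialization.
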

\begin{proof}
Similarly as the proof to \cref{th:03:02}, we use the SAF given via a quadratic form in \cref{eq:02:06}, and let $X=N^{-1}U(S^\dagger)^2U^T$ and $\bm{y}=\bm{\omega}$. Applying \cref{th:03:07} yields the desired result.
\end{proof}

The variance of the SAF presented in \cref{th:03:07} does not have as straight forward of an interpretation as the expected value (see \cref{th:03:02}). However, it has an important similarity in that it consists of two terms, one term is a quadratic form with the mean vector $\bm{\mu}$, and the other term is a trace of a matrix product that includes the covariance matrix $\Sigma$. Next we present several corollaries of \cref{th:03:02} that, similar to the results given in section~\ref{sec:3:1}, consider the specific cases of natural frequency vector with equal means, independent natural frequencies with equal variances, and IID natural frequencies.

\begin{corollary}[Variance of SAF for Equal-Mean Random Frequencies]\label{cor:03:08}
In the case where the natural frequencies have the same mean, $\bm{\mu}=\overline{\omega}{\bf 1}$, then the variance of the SAF is given by
\begin{align}
\mathrm{Var}\left[J(\bm{\omega},L)\right] &= \frac{2}{N^2}\mathrm{Tr}[ [U(S^{\dagger})^2U^T \Sigma]^2]   .\label{eq:03:11}
\end{align}
\end{corollary}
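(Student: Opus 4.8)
The plan is to obtain \cref{cor:03:08} as an immediate specialization of the general variance formula in \cref{th:03:07}. Abbreviate the matrix of the quadratic form by $M := U(S^{\dagger})^2U^T$, which is symmetric and positive semidefinite; then \cref{eq:03:10} reads $\mathrm{Var}[J(\bm{\omega},L)] = \frac{1}{N^2}\big( 2\,\mathrm{Tr}[(M\Sigma)^2] + 4\,\bm{\mu}^T M\Sigma M\bm{\mu} \big)$. Comparing this with the claimed \cref{eq:03:11}, the entire content of the corollary is the assertion that the mean-dependent term vanishes under the hypothesis $\bm{\mu}=\overline{\omega}\bm{1}$, so everything reduces to showing $\bm{\mu}^T M\Sigma M\bm{\mu}=0$ for $\bm{\mu}$ proportional to $\bm{1}$, with $\Sigma$ otherwise arbitrary.

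The crux is therefore the single identity $M\bm{1}=\bm{0}$. For undirected networks this is immediate: there $\bm{1}/\sqrt{N}$ is exactly the trivial eigenvector (it spans both the kernel and the cokernel of $L$), so in the spectral representation $M=\sum_{j=2}^{N}s_j^{-2}\,\bm{u}^j\bm{u}^{jT}$ underlying \cref{eq:02:07} the vector $\bm{1}$ is orthogonal to every $\bm{u}^j$ with $j\ge 2$, whence $M\bm{1}=\bm{0}$. More generally, using \cref{eq:03:06} to write $M=(L^{\dagger})^{T}L^{\dagger}$, the identity reduces to $L^{\dagger}\bm{1}=\bm{0}$, which holds because $\bm{1}$ is the trivial null direction of the combinatorial Laplacian, $L\bm{1}=\bm{0}$ (every row of $L$ sums to zero). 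Either way one concludes $M\bm{\mu}=\overline{\omega}\,M\bm{1}=\bm{0}$.

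With $M\bm{\mu}=\bm{0}$ in hand, the remainder is bookkeeping: by symmetry of $M$, $\bm{\mu}^T M\Sigma M\bm{\mu}=(M\bm{\mu})^T\Sigma(M\bm{\mu})=\bm{0}^T\Sigma\,\bm{0}=0$, so the mean-dependent term in \cref{eq:03:10} drops out and we are left with $\mathrm{Var}[J(\bm{\omega},L)]=\frac{2}{N^2}\mathrm{Tr}[(M\Sigma)^2]$, i.e.\ \cref{eq:03:11}. I do not anticipate a genuine obstacle here; the only step that deserves explicit justification rather than being quoted is $M\bm{1}=\bm{0}$ --- this is where the equal-mean hypothesis meets the spectral structure of $L$ --- and I note that neither the Gaussian assumption inherited from \cref{th:03:07} nor any special form of $\Sigma$ is needed in this reduction beyond what is already in force.
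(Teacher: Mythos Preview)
Your approach is the same as the paper's: both reduce the claim to showing that the $\bm{\mu}$-dependent term in \cref{eq:03:10} vanishes when $\bm{\mu}=\overline{\omega}\bm{1}$, and the paper does this by simply asserting one may ``rescale the mean vector to $\bm{\mu}=\bm{0}$'' (which is exactly your $M\bm{1}=\bm{0}$ in disguise), whereas you actually supply the spectral justification. One caveat on your ``more generally'' clause: the implication $L\bm{1}=\bm{0}\Rightarrow L^{\dagger}\bm{1}=\bm{0}$ is not automatic for non-symmetric $L$, since for the Moore--Penrose inverse one has $\mathrm{null}(L^{\dagger})=\mathrm{null}(L^{T})$, not $\mathrm{null}(L)$; your undirected argument via orthogonality to $\bm{u}^{j}$, $j\ge 2$, is the clean one, and the directed case (where $\bm{u}^{1}$ need not be proportional to $\bm{1}$) is glossed over in the paper as well.
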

\begin{proof}
Similarly to the proof for \cref{cor:03:03}, we may rescale the mean vector to $\bm{\mu}=\bm{0}$. The latter term on the right hand side of \cref{eq:03:10} then vanishes, i.e., 
\begin{align}
\bm{0}^T U(S^{\dagger})^2U^T \Sigma U(S^{\dagger})^2U^T \bm{0}=0,\label{eq:03:12}
\end{align}
completing the proof
\end{proof}

\begin{corollary}[Variance of SAF for Independent Random Frequencies of Equal Variance]\label{cor:03:09}
In the case where the natural frequencies are uncorrelated and have the same variance, $\Sigma = \mathrm{Var}(\omega)I_N$, where $I_N$ is an identity matrix of size $N$, then the variance of the SAF is given by
\begin{align}
\mathrm{Var}\left[J(\bm{\omega},L)\right] &= \frac{1}{N^2}\left[2 \mathrm{Var}(\omega)^2  \sum_{j=2}^N  \frac{1}{s_j^4} +   4 \mathrm{Var}(\omega)\bm{ \mu}^T U(S^{\dagger})^4U^T \bm{\mu}\right].\label{eq:03:13}
\end{align}
\end{corollary}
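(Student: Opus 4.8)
The plan is to specialize \cref{th:03:07} to the case $\Sigma = \mathrm{Var}(\omega) I_N$ and simplify the two resulting terms using only the orthonormality of the columns of $U$ and the diagonal structure of $(S^\dagger)^2$. First I would substitute $\Sigma = \mathrm{Var}(\omega) I_N$ into \cref{eq:03:10}. Since $\mathrm{Var}(\omega)$ is a scalar it factors out of both the trace term and the quadratic form, and the identity matrix drops out of the matrix products, reducing the bracketed expression to $2\,\mathrm{Var}(\omega)^2\,\mathrm{Tr}\big[[U(S^\dagger)^2U^T]^2\big] + 4\,\mathrm{Var}(\omega)\,\bm{\mu}^T U(S^\dagger)^2U^T U(S^\dagger)^2U^T \bm{\mu}$.

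Next I would collapse the repeated product. Using $U^TU = I$, one has $[U(S^\dagger)^2U^T]^2 = U(S^\dagger)^2(S^\dagger)^2U^T = U(S^\dagger)^4U^T$, with $(S^\dagger)^4 = \diag(0,s_2^{-4},\dots,s_N^{-4})$. This immediately handles the second term, which becomes $4\,\mathrm{Var}(\omega)\,\bm{\mu}^T U(S^\dagger)^4U^T \bm{\mu}$, matching the corresponding term of \cref{eq:03:13}. For the trace term I would reuse the argument from the proof of \cref{cor:03:04} (following \cref{eq:03:06}--\cref{eq:03:07}): the matrix $U(S^\dagger)^4U^T$ is symmetric with the orthonormal columns of $U$ as eigenvectors and the diagonal entries of $(S^\dagger)^4$ as eigenvalues, so $\mathrm{Tr}[U(S^\dagger)^4U^T]$ equals the sum of those eigenvalues, namely $\sum_{j=2}^N s_j^{-4}$, the $j=1$ term vanishing because $s_1 = 0$. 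Substituting both simplifications back and retaining the $1/N^2$ prefactor yields \cref{eq:03:13}.

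This argument is essentially bookkeeping rather than substance; the only points that require a small amount of care are (i) remembering that $(S^\dagger)^2$ being diagonal is precisely what makes $U(S^\dagger)^2U^T$ symmetric, so that squaring it reduces to squaring its eigenvalues, and (ii) keeping track of the trivial singular value, i.e.\ ensuring the sum runs from $j=2$ rather than $j=1$, which relies on the standing strong-connectivity assumption that $L$ has a one-dimensional kernel. I do not anticipate any genuine obstacle.
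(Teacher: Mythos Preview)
Your proposal is correct and follows essentially the same approach as the paper: specialize \cref{th:03:07} with $\Sigma = \mathrm{Var}(\omega)I_N$ and simplify the two terms separately. The only cosmetic difference is that the paper routes the trace computation through $(L^\dagger)^TL^\dagger(L^\dagger)^TL^\dagger$ (via \cref{eq:03:06}) and its eigenvalues, whereas you work directly with $U(S^\dagger)^4U^T$ using $U^TU=I$; both reductions immediately yield $\sum_{j=2}^N s_j^{-4}$.
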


\begin{proof}
Applying \cref{th:03:07} and beginning with the first term on the right hand side of \cref{eq:03:10}, we have that
\begin{align}
\mathrm{Tr}[ [U(S^{\dagger})^2U^T \Sigma]^2]=\mathrm{Var}(\omega)^2\mathrm{Tr}[ (L^\dagger)^TL^\dagger (L^\dagger)^TL^\dagger]= \mathrm{Var}(\omega)^2\sum_{j=1}^N  \hat{\lambda}_j, \label{eq:03:14}
\end{align}
where we have used that $U(S^\dagger)^2U^T=(L^\dagger)^2L^\dagger$ (see \cref{eq:03:06}) and $\mathrm{Tr}[(L^\dagger)^TL^\dagger (L^\dagger)^TL^\dagger]$ is equal to the sum of its eigenvalues, denoted $\{\hat{\lambda}_j\}_{j=1}^N$. Moreover, similar to the proof for \cref{cor:03:04} we have that $\hat{\lambda}_1=0$ and $\hat{\lambda}_i=s_i^{-4}$ for $i=2,\dots,N$, completing the first term on the right hand side of \cref{eq:03:13}. For the other term, we take the second term on the right hand side of \cref{eq:03:10}, which simplifies to
\begin{align}
\bm{ \mu}^T U(S^{\dagger})^2U^T \Sigma U(S^{\dagger})^2U^T \bm{\mu}&=\mathrm{Var}(\omega)\bm{ \mu}^T U(S^{\dagger})^2U^T U(S^{\dagger})^2U^T \bm{\mu}\nonumber\\&=\mathrm{Var}(\omega)\bm{ \mu}^T U(S^{\dagger})^4U^T \bm{\mu},\label{eq:03:15}
\end{align}
completing the second term on the right hand side of \cref{eq:03:13}, completing the proof.
\end{proof}

\begin{corollary}[Variance of SAF for IID Frequencies]\label{cor:03:10}
In the case where the natural frequencies identically and independently distributed, then the variance of the SAF is given by
\begin{align}
\mathrm{Var}\left[J(\bm{\omega},L)\right] 
&= \frac{2 \mathrm{Var}(\omega)^2}{N^2}   \sum_{j=2}^N  \frac{1}{s_j^4}.\label{eq:03:16}
\end{align}
\end{corollary}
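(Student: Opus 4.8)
The plan is to obtain \cref{cor:03:10} as an immediate specialization of the two preceding results, in exact analogy with how \cref{cor:03:05} was deduced from \cref{cor:03:03} and \cref{cor:03:04}. The key observation is that a vector of identically and independently distributed (and, per the standing assumption of this subsection, normally distributed) natural frequencies simultaneously satisfies the hypothesis of \cref{cor:03:09} (uncorrelated entries with common variance, $\Sigma = \mathrm{Var}(\omega) I_N$) and that of \cref{cor:03:08} (all entries share a common mean, $\bm{\mu} = \overline{\omega}\bm{1}$). So the first step is simply to apply \cref{cor:03:09}, which expresses $\mathrm{Var}[J(\bm{\omega},L)]$ via \cref{eq:03:13} as the sum of a $\Sigma$-dependent trace term and a $\bm{\mu}$-dependent quadratic form; the only remaining work is to dispatch the latter.

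For the mean-dependent term, I would invoke the reduction already used in the proof of \cref{cor:03:08}: since the equal-mean vector may be rescaled so that $\bm{\mu} = \bm{0}$, the quadratic form $4\,\mathrm{Var}(\omega)\,\bm{\mu}^T U(S^{\dagger})^4 U^T \bm{\mu}$ in \cref{eq:03:13} vanishes, leaving exactly $\frac{2\,\mathrm{Var}(\omega)^2}{N^2}\sum_{j=2}^N s_j^{-4}$, which is \cref{eq:03:16}. Equivalently --- and this is the self-contained route I would actually write down --- apply \cref{th:03:07} directly with $\Sigma = \mathrm{Var}(\omega) I_N$ and (after the harmless rescaling) $\bm{\mu} = \bm{0}$: the mean term is then identically zero, and the surviving trace collapses using $U^T U = I$ and the cyclic invariance of the trace,
\begin{align}
\mathrm{Tr}[(U(S^{\dagger})^2U^T)^2]
&= \mathrm{Tr}[U(S^{\dagger})^4U^T]
= \mathrm{Tr}[(S^{\dagger})^4]
= \sum_{j=2}^N \frac{1}{s_j^4},\nonumber
\end{align}
since $(S^{\dagger})^4 = \mathrm{diag}(0, s_2^{-4}, \dots, s_N^{-4})$; multiplying by $2\,\mathrm{Var}(\omega)^2 / N^2$ yields \cref{eq:03:16}. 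A third option is to recycle the eigenvalue bookkeeping from the proof of \cref{cor:03:09}, where the eigenvalues of $(L^\dagger)^T L^\dagger (L^\dagger)^T L^\dagger$ were identified as $\hat{\lambda}_1 = 0$ and $\hat{\lambda}_i = s_i^{-4}$ for $i \ge 2$.

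I do not anticipate any genuine obstacle: the statement is a corollary of corollaries, and the one point that warrants an explicit sentence is the vanishing of the mean-dependent contribution, which is immediate once one notes (exactly as in \cref{cor:03:08}) that for IID frequencies the common mean may be taken to be zero. Everything else --- in particular the reduction of $\mathrm{Tr}[(U(S^{\dagger})^2U^T)^2]$ to $\sum_{j \ge 2} s_j^{-4}$ via orthonormality of $U$ and cyclicity of the trace --- is routine and has essentially already appeared in the proof of \cref{cor:03:09}.
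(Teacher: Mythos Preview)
Your proposal is correct and follows essentially the same approach as the paper, which simply states that the result follows directly from applying \cref{cor:03:08} and \cref{cor:03:09}. Your additional self-contained routes (direct application of \cref{th:03:07} with $\bm{\mu}=\bm{0}$ and $\Sigma=\mathrm{Var}(\omega)I_N$, or recycling the eigenvalue identification from the proof of \cref{cor:03:09}) are equally valid but unnecessary for this corollary.
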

\begin{proof}
The result follows directly from applying \cref{cor:03:08} and \cref{cor:03:09}.
\end{proof}

\subsection{Numerical Validation of Theory and Definition of Link Weight Localization}\label{sec:3.3}

We conclude this section by presenting numerical experiments to illustrate the accuracy of the analytical results developed  above. In an attempt to kill two birds with one stone, we will present an experiment that both validates our theory and introduces a network property that will be explored in detail in the next section: link weight localization. Roughly speaking, weight localization describes the trade-off between sparser network structures with more strongly weighted links versus denser network structures with more weakly weighted links. A reasonable metric for measuring weight localization of a given network is the mean weight associated with each link in the network.

\begin{definition}[Weight Localization] \label{def:Localization}
Consider a weighted network consisting of $N$ nodes with non-negative adjacency matrix $A$. The weight localization, denoted $\ell$ is given by the mean link weight,
\begin{align}
\ell
=\frac{\sum_{n,m}\chi_{nm}A_{nm}}{\sum_{n,m}\chi_{nm}},\label{eq:04:01}
\end{align}
where $\chi$ is the link indicator function, i.e., $\chi_{nm}=1$ if $A_{nm}>0$ and $0$ if $A_{nm}=0$.
\end{definition}

We will use link weight localization to define an ensemble of Kuramoto phase oscillator systems, allowing us to illustrate the accuracy of our theory for a range of SAF values. We consider weighted versions of Erd\H{o}s--R\'enyi (ER) and scale-free (SF) networks constructed using the configuration model~\cite{Molloy1995}. Specifically, given a network with $N$ nodes we construct and $M=N\langle k \rangle/\ell$ undirected edges, each having weight $\ell$. (Recall that $k_i=\sum_{j}A_{ij}$ denotes the strength, or weighted degree, of node $i$ and $\langle k\rangle$ denotes the average of $k_i$.) 

\begin{figure}[htbp]
\centering
\includegraphics[width=0.45\textwidth]{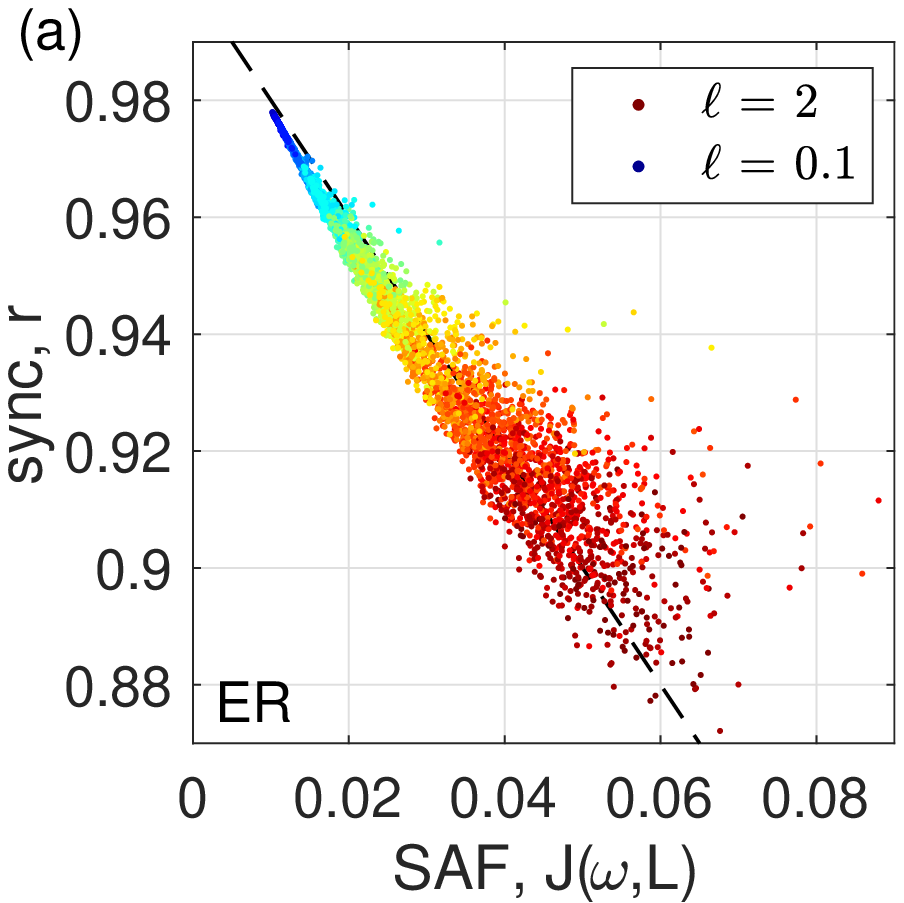}
\includegraphics[width=0.45\textwidth]{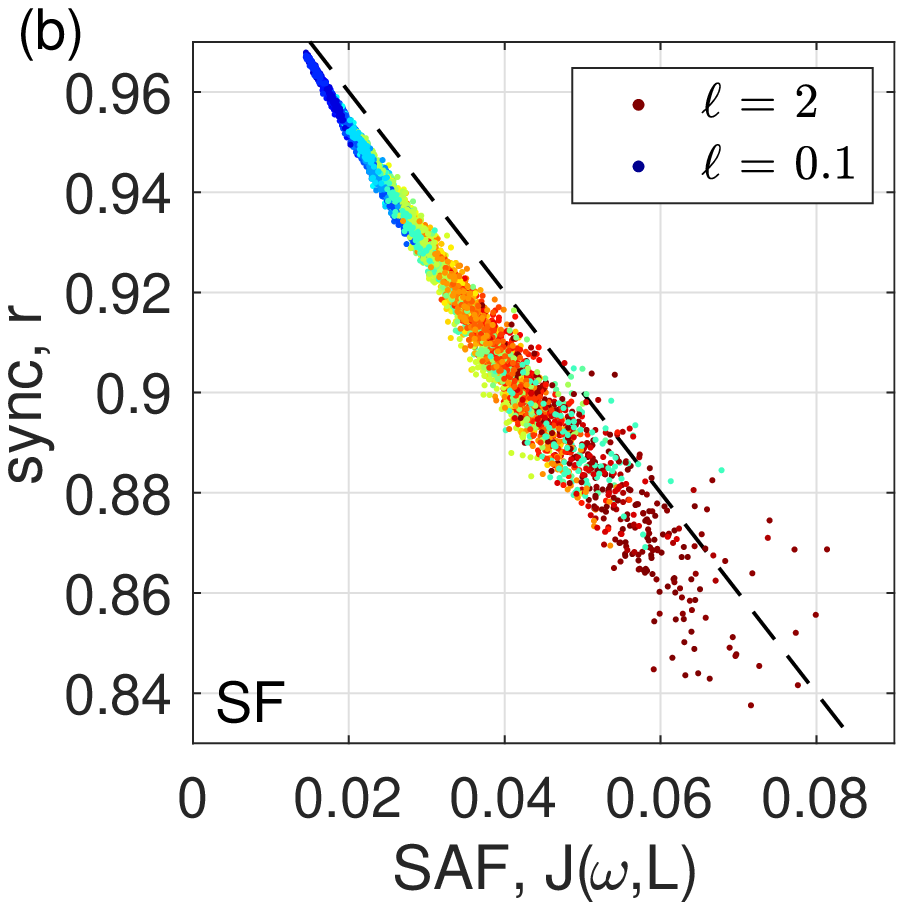}
\caption{{\bf Kuramoto order parameter $r$ versus SAF $J(\bm{\omega},L)$}. The degree of synchronization, as measured by $r$, versus the SAF at coupling strength $K=0.5$ for (a) ER and (b) SF networks with IID normal-distributed frequencies. Results are shown for $100$ networks with localization $\ell$ randomly and uniformly drawn from the interval $[0.1,2]$, and each network is paired with $50$ different random draws for the frequencies. Data points are colored according to localization, ranging from $\ell=0.1$ (blue) to $2$ (red). The linear approximation $r\approx 1-J(\bm{\omega},L)/(2K^2)$ given by \cref{eq:02:09} is indicated with the  dashed line.}
\label{fig:02}
\end{figure}

We use this class of networks to illustrate the utility of the SAF in predicting the Kuramoto order parameter $r$ in the strong synchronization regime and verify the analytical results from the previous section. We consider $100$ weighted ER and SF networks each, all of size $N=500$ with mean degree $\langle k\rangle=10$ (and $\gamma=3$ for the SF networks). For each network we chose $\ell$ randomly and uniformly from the interval $[0.1,2]$ and consider $50$ different frequency vectors with IID normal-distributed entries with unit variance and zero mean. For this choice of frequencies, the expectation and variance of the SAF are described by \cref{cor:03:05} and \cref{cor:03:10}, respectively. In \cref{fig:02}, we plot the time average of $r$ for $K=0.5$ versus the SAF $J(\bm{\omega},L)$ for  each frequency vector and each network. The dashed line indicates the linear approximation $r\approx 1-J(\bm{\omega},L)/(2K^2)$ given by \cref{eq:02:09}. Each data point is colored according to $\ell$. For both ER and SF networks, we note strong agreement between the predicted and observed values of $r$.

\begin{figure}[htbp]
\centering
\includegraphics[width=0.45\textwidth]{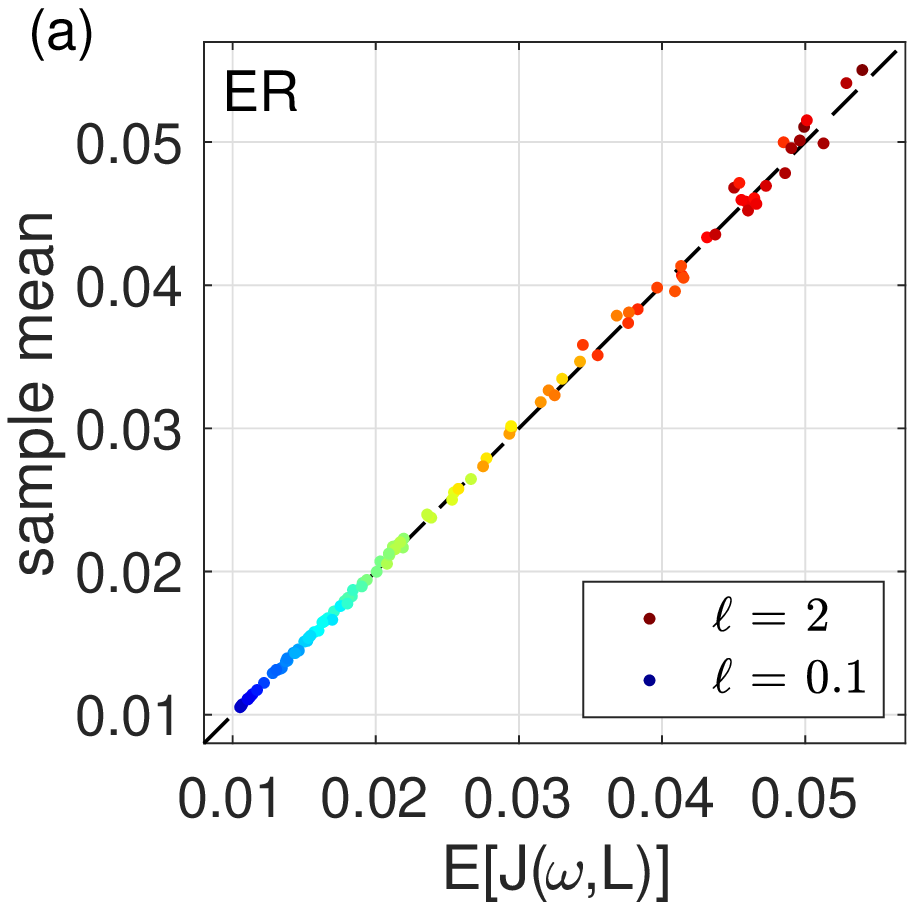}
\includegraphics[width=0.45\textwidth]{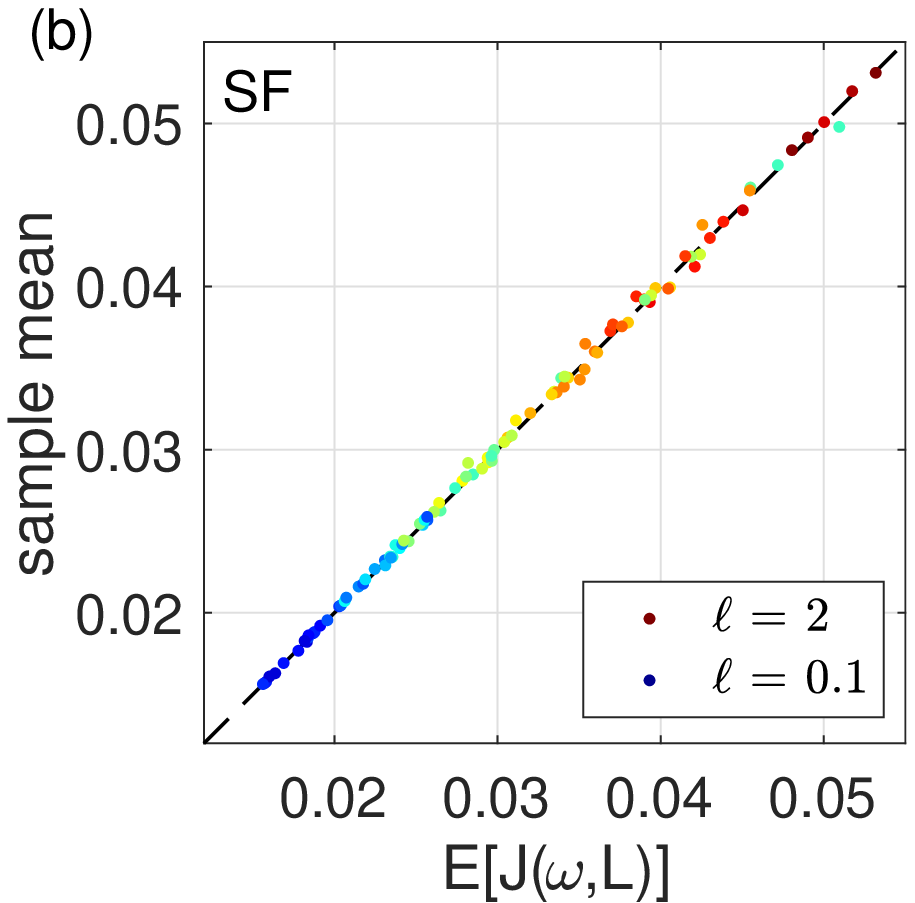}\\
\includegraphics[width=0.45\textwidth]{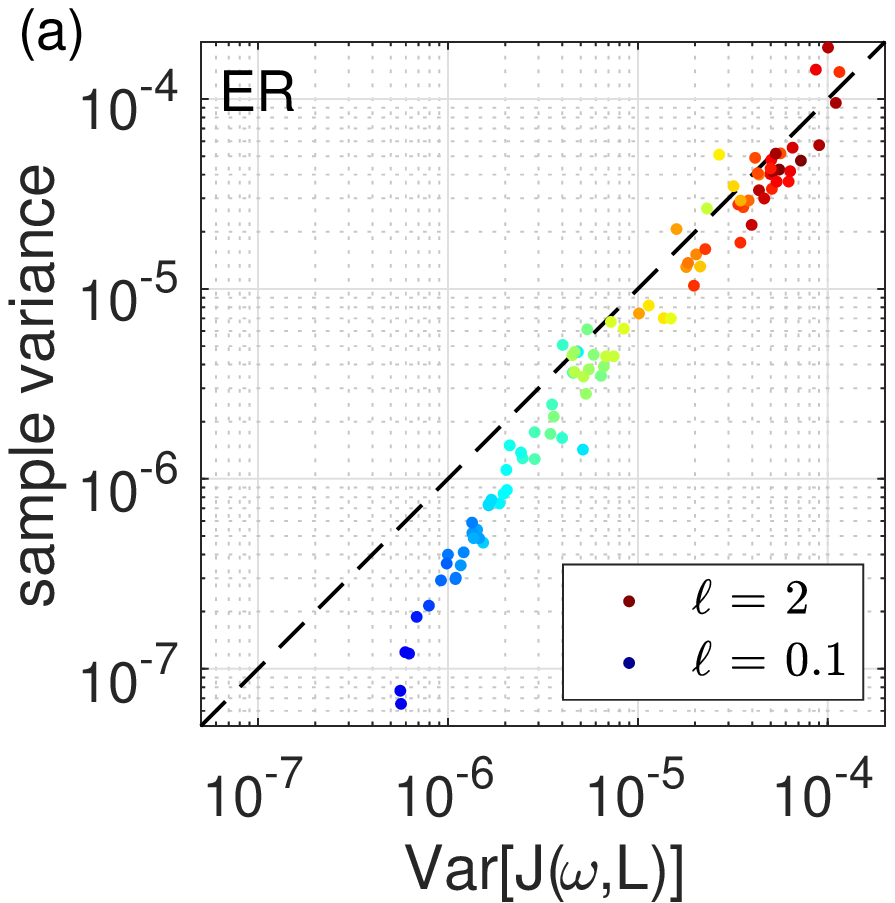}
\includegraphics[width=0.45\textwidth]{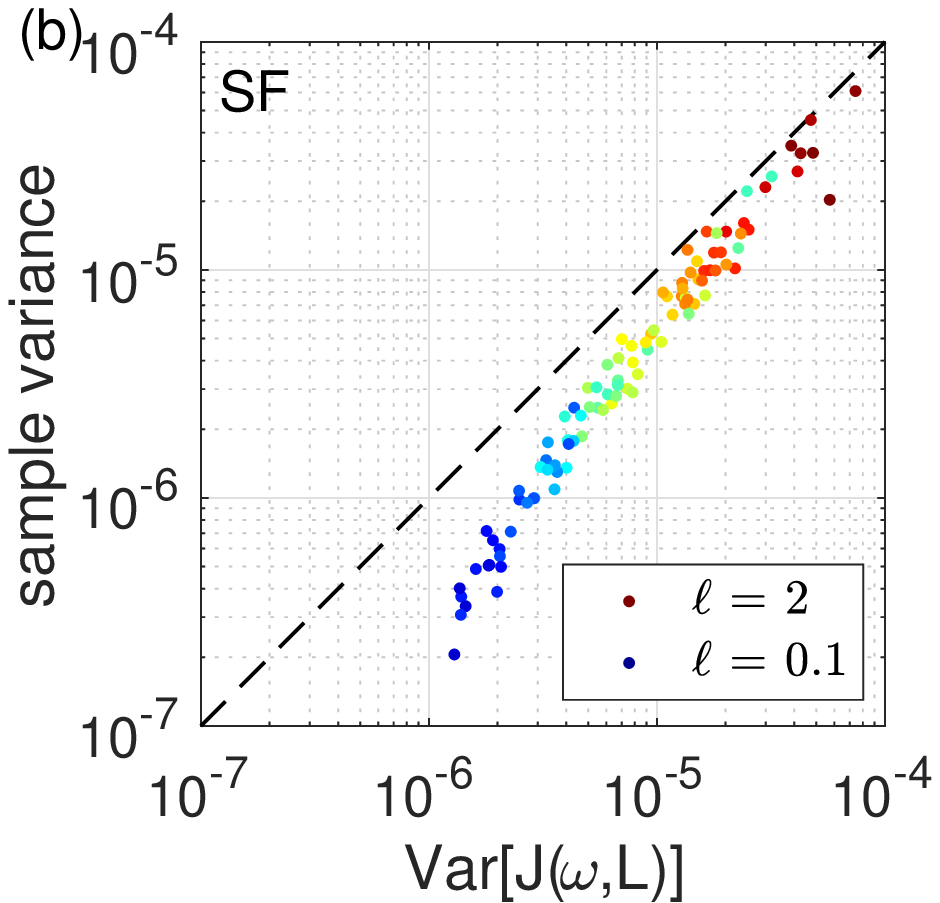}
\caption{{\bf Expectation and variance of the SAF}. We compare our predictions for $\mathbb{E}[J(\bm{\omega},L)]$ and $\mathrm{Var}[J(\bm{\omega},L)]$ given by \cref{eq:03:08} and \cref{eq:03:16} to sample estimates for the expectation and variance of the SAF. The sample estimates are based on 50 samples of IID normal-distributed frequencies for each network. Results are shown for (left) ER and (right) SF networks with edges having weight $\ell\in[0.1,2]$. 
}\label{fig:03}
\end{figure}

Next, we verify our analysis for the expectation and variance of the SAF by comparing these predictions to sample means and sample variances that are estimated using the $50$ frequency vectors for each network. In \cref{fig:03} we plot these sample means and variances versus their expected SAF and its variances, which are given by \cref{eq:03:08} and \cref{eq:03:16}, respectively. Again, we color each datapoint according to  $\ell$. Overall, we note excellent agreement between our analytical predictions and the sample means and variances observed. We note a small deviation in the variance for significantly delocalized network, where the analytical results over-predict our observations.

\section{SAF Expectation and Variance Reveals Generic Network Properties that Promote Synchronization}\label{sec:04}

In this section, we illustrate that the theoretical results obtained in the previous section provide a framework for understanding how generic network properties affect synchronization. In particular, we will study link weight delocalization (\cref{subsec:04:01}), link directedness (\cref{subsec:04:02}), and degree-frequency correlations (\cref{subsec:04:03}), and we show that each of these properties  tends to promote the synchronization properties of network-coupled heterogeneous oscillators.

\subsection{Weight Delocalization Promotes Synchronization}
\label{subsec:04:01}

The first network property that we identify as a promoter of network synchronization is weight delocalization. Recall from \cref{sec:3.3} that weight localization/delocalization quantifies the trade-off between sparse/dense structures and strong/weak connections. Specifically, in the context of conserving the total of all link weights in the network, one may strengthen the average link weight in the network at the expense of ending up with fewer links (and therefore a sparser structure) or one may increase the number of total links in the network (therefore ending up with a denser structure) at the expense of weakening the average link weight. At first glance the effect that this trade-off has on overall synchronization dynamics is not clear, however we will show that weight delocalization improves synchronization while weight localization diminishes it. 

We begin with some numerical simulations by plotting in \cref{fig:04} the Kuramoto order parameter $r$ versus coupling strength $K$ for networks with different levels of localization ranging from $\ell=2$ (localized; red, bottom) to $\ell=0.25$ (delocalized; blue, top). Similar to the simulation in the previous section, we consider IID normal-distributed natural frequencies $\{\omega_i\}$ and (a) ER and (b) SF networks of size $N=500$ with mean degree $\langle k\rangle=10$ with exponent $\gamma=3$ for SF networks. Each datapoint represents an average of $r$ over $50$ networks. Importantly,   note that in both ER and SF networks the delocalized networks (i.e., small $\ell$) display stronger synchronization properties than the localized networks (i.e., larger $\ell$). That is, synchronization is promoted by having more links with weaker link weights.

\begin{figure}[htbp]
\centering
\includegraphics[width=0.49\textwidth]{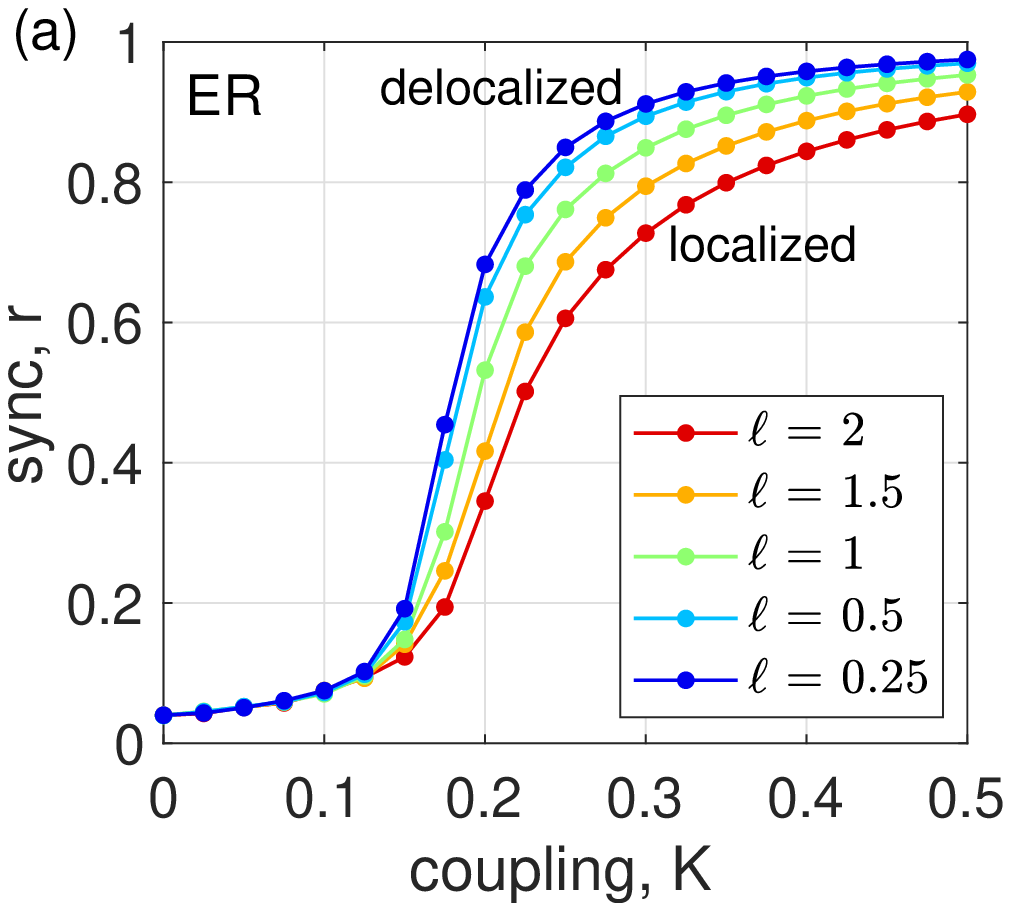}
\includegraphics[width=0.49\textwidth]{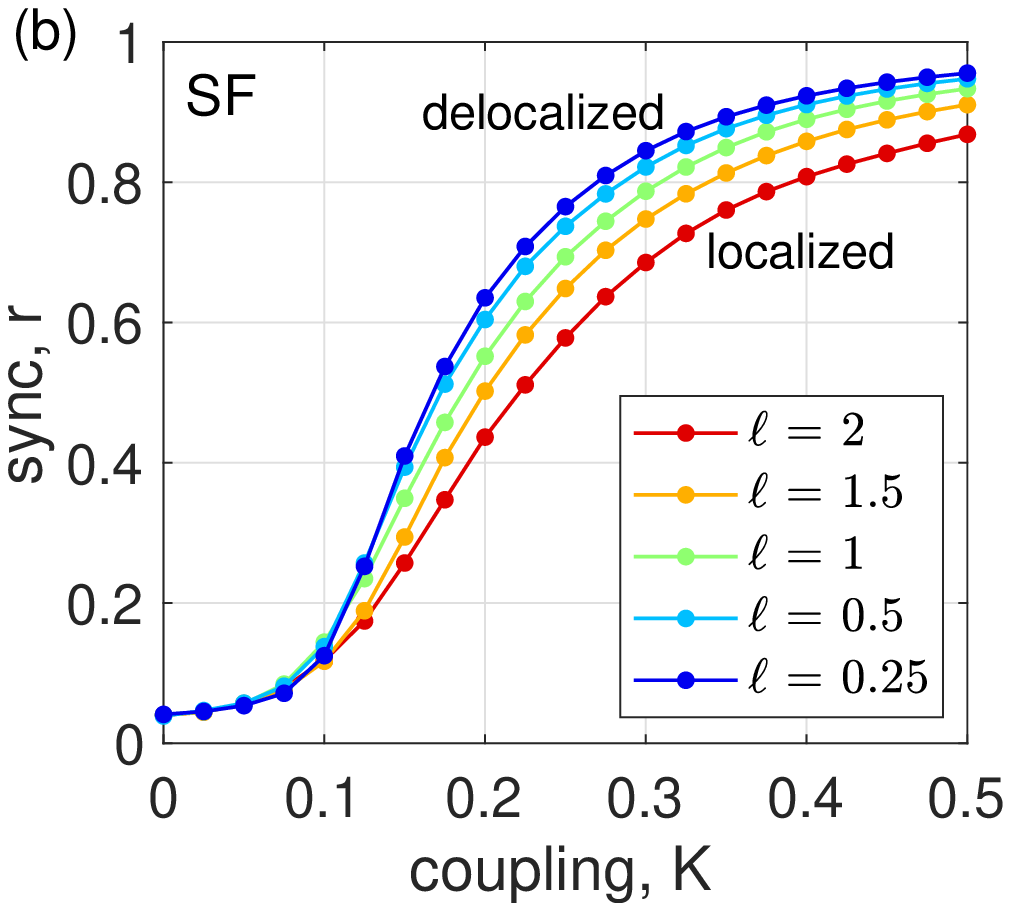}
\caption{{\bf Weight delocalization promotes synchronization.} The Kuramoto order parameter $r$ versus the coupling strength $K$ for weighted versions of (a) ER and (b) SF networks. Weight localization varies from $l=0.25$ (blue, top) to $2$ (red, bottom). All networks are of size $N=500$ nodes with mean degree $\langle k \rangle=10$ (and degree exponent $\gamma=3$ for the SF networks). Data points represent an average over $50$ networks. The oscillators' natural frequencies are drawn from the standard normal distribution.}\label{fig:04}
\end{figure}

To  explain this phenomenon, we first apply and extend some results from random matrix theory to understand the effect that weight delocalization has on the eigenvalue spectrum of the network's Laplacian matrix. Note first that the results in \cref{fig:04} used IID natural frequencies, and therefore the analytical description of the expected value and variance of the SAF were described by \cref{cor:03:05} and \cref{cor:03:10}. Moreover, in the context of undirected networks where the Laplacian $L$ is symmetric, i.e., $L^T=L$, the singular value decomposition reduces to the eigenvalue/eigenvector diagonalization of $L$. In particular, the singular values $\{s_i\}_{i=1}^N$ are given precisely by the eigenvalues $\{\lambda_i\}_{i=1}^N$. Therefore, to understanding the effect of weight delocalization on the expectations of the SAF, and thereby the overall synchronization properties of a network, we must understand the effect of weight delocalization on the spectral density  $\rho(\lambda)$ of Laplacian eigenvalues of a network. Note that once the behavior of $\rho(\lambda)$ is well understood, both $\mathbb{E}[J(\bm{\omega},L)]$ and $\mathrm{Var}[J(\bm{\omega},L)]$ can be easily predicted since they are proportional to the variance of frequencies---which we  denote $\mathrm{Var}(\omega)$---as well as the second and fourth moments of $\rho(\lambda)$ given by $\sum_{j=2}^N\lambda_j^{-2}$ and $\sum_{j=2}^N\lambda_j^{-4}$, respectively. [See \cref{eq:03:08} and \cref{eq:03:16}.]

We first study the effect of localization on the quantities $\sum_{j=2}^N\lambda_j^{-2}$ and $\sum_{j=2}^N\lambda_j^{-4}$ for $k$-regular networks, which have the property that all nodes have precisely the same degree, $k_n=k$. Moreover, we begin with unweighted case, where each of a node's $k$ links have weight $w=1$. In this case McKay's theorem characterizes the eigenvalue spectrum of the Laplacian matrix for the asymptotic, large-$N$ limit.

\begin{theorem}[McKay's Theorem \cite{McKay1981}]\label{th:04:02}
Consider a sequence of unweighted $k$-regular networks with Laplacian matrices $L^{(m)}$, each of size $N_m$ with $N_m\to\infty$ as $m\to\infty$ . Further, let $\{\lambda_i^{(m)}\}_{i=1}^{N_m}$ denote the $N_m$ eigenvalues for each $L^{(n)}$. Then the empirical spectral density $\rho_m(\lambda)= \frac{1}{N_m}\sum_{i=1}^{N_m} \delta_{\lambda_i} $ converges in probability with $m\to\infty$ as 
\begin{align}
\rho_m(\lambda) \to \rho(\lambda)=
\begin{cases}
\frac{\displaystyle k \sqrt{4 (k -1)  - (\lambda-k)^2}}{\displaystyle  2 \pi (k^2 - (\lambda-k)^2)} ,& \text{if $|\lambda-k| \leq 2
  \sqrt{k-1},$}\\ 
0 ,& \text{otherwise.}
\end{cases}\label{eq:04:02}
\end{align}
\end{theorem}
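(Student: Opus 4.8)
The plan is to reduce the statement to the classical Kesten--McKay law for the \emph{adjacency} matrix and then extract the density explicitly via a resolvent (Green's function) computation on the infinite $k$-regular tree. Since each network is $k$-regular, its Laplacian is $L^{(m)} = kI - A^{(m)}$, so the eigenvalues of $L^{(m)}$ are $\lambda = k - \mu$ for $\mu$ an eigenvalue of $A^{(m)}$, and the empirical spectral density of $L^{(m)}$ is the pushforward of that of $A^{(m)}$ under $\mu \mapsto k-\mu$. It therefore suffices to show that the empirical spectral density of $A^{(m)}$ converges in probability to $\rho_A(\mu) = k\sqrt{4(k-1)-\mu^2}\,/\,[2\pi(k^2-\mu^2)]$ supported on $|\mu| \le 2\sqrt{k-1}$; substituting $\mu = k-\lambda$ then yields \eqref{eq:04:02}, using $k^2 - \mu^2 = k^2 - (\lambda-k)^2$ and $4(k-1) - \mu^2 = 4(k-1) - (\lambda-k)^2$, and the support becomes $|\lambda-k|\le 2\sqrt{k-1}$.

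For the adjacency law itself, I would first invoke local weak (Benjamini--Schramm) convergence: a uniformly random $k$-regular graph on $N_m$ vertices converges locally to the infinite $k$-regular tree $T_k$, because the expected number of cycles of any fixed length through a typical vertex tends to zero (the pairing/configuration model makes this a short second-moment computation). Consequently, the averaged diagonal resolvent $N_m^{-1}\sum_i [(A^{(m)} - z)^{-1}]_{ii}$ converges, for $z$ in the upper half-plane, to the root Green's function $G(z)$ of $T_k$. The latter is obtained from the standard tree recursion: deleting the root of $T_k$ leaves $k$ identical rooted subtrees in which the new root has degree $k-1$, so if $g_1(z)$ denotes the corresponding diagonal Green's function, a Schur complement gives the quadratic $(k-1)g_1^2 + z\,g_1 + 1 = 0$ and then $G(z) = \bigl(-z - k\,g_1(z)\bigr)^{-1}$. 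One selects the branch $g_1(z) = \bigl(-z + \sqrt{z^2 - 4(k-1)}\bigr)\,/\,[2(k-1)]$ that is analytic on the upper half-plane with $G(z) \sim -1/z$ as $|z| \to \infty$. Applying Stieltjes inversion, $\rho_A(\mu) = \pi^{-1}\lim_{\varepsilon \to 0^+}\mathrm{Im}\,G(\mu + i\varepsilon)$, the only contribution comes from the branch cut $|\mu| \le 2\sqrt{k-1}$ of the square root, and simplifying the imaginary part reproduces $\rho_A$ while showing the density vanishes off this interval.

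I expect the main obstacle to be the second step: rigorously upgrading the purely combinatorial local tree-likeness of random regular graphs to \emph{spectral} convergence strong enough to commute with the $\varepsilon\to 0^+$ limit in Stieltjes inversion---that is, ruling out escaping eigenvalue mass and establishing concentration of the empirical resolvent around its tree limit. An equivalent and more hands-on route, closer to McKay's original argument, replaces this by the method of moments: $N_m^{-1}\mathrm{Tr}\bigl[(A^{(m)})^\ell\bigr]$ counts normalized closed walks of length $\ell$, any closed walk whose trace is not a tree contains a cycle of length at most $\ell$ and hence contributes $o(N_m)$ in expectation, so the $\ell$-th moment converges to the number $c_\ell$ of closed walks of length $\ell$ from the root of $T_k$ (with $c_\ell = 0$ for odd $\ell$, $T_k$ being bipartite). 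One then checks, via the substitution $\mu = 2\sqrt{k-1}\cos\phi$ or by expanding $G(z)$ in powers of $1/z$, that $\int \mu^\ell \rho_A(\mu)\,d\mu = c_\ell$; since $\rho_A$ is compactly supported it is determined by its moments, so convergence of moments gives convergence in distribution. Either way, undoing the affine shift $\lambda = k-\mu$ completes the proof of \eqref{eq:04:02}.
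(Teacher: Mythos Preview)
Your sketch is correct and in fact considerably more detailed than anything the paper offers: the paper does not prove this theorem at all. Theorem~\ref{th:04:02} is stated as a classical result, attributed to McKay~\cite{McKay1981}, and is invoked only as a black box to derive the weighted analogue in Theorem~\ref{th:04:03}. So there is no proof in the paper to compare against.

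That said, a brief assessment of your outline. The reduction $L^{(m)}=kI-A^{(m)}$ with the affine change of variables $\lambda=k-\mu$ is exactly right and makes the passage from the Kesten--McKay density $\rho_A$ to the stated $\rho$ immediate (the Jacobian is $1$). Of your two routes to $\rho_A$, the moment method is McKay's original argument and is the cleaner one to make fully rigorous: the key input is that for each fixed $\ell$ the number of closed non-backtracking walks that are not tree-like is $o(N_m)$, which for uniformly random $k$-regular graphs follows from the standard fact that the number of cycles of each fixed length is tight. Your resolvent/tree-recursion computation is also correct---the Schur-complement identities $(k-1)g_1^2+zg_1+1=0$ and $G=(-z-kg_1)^{-1}$ are the standard ones for $T_k$---and you have accurately flagged the genuine technical cost of that route, namely controlling the exchange of the $N_m\to\infty$ and $\varepsilon\to 0^+$ limits. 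One small caveat: the theorem as phrased in the paper speaks of ``a sequence of $k$-regular networks'' without explicitly saying they are random; McKay's original statement is deterministic and requires only the short-cycle condition, with random regular graphs being the canonical example. Your proof assumes the random model, which is consistent with the paper's phrase ``converges in probability'' but is slightly narrower than McKay's formulation.
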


McKay's Theorem is a pioneering result in random matrix theory and research into the behavior of spectral densities for various matrices. The density $\rho(\lambda)$, which is centered at the node degree $k=\int_\lambda \lambda \hat{\rho}(\lambda)d\lambda$ and has bounded support on the interval $\lambda \in (k- 2\sqrt{k-1} ,k+ 2\sqrt{k-1} )$. 
We note that there remains an active research community working to extend McKay's law. In particular, it was recently shown that the same result can be obtained by allowing $k$ to increase with $n$, provided that $k$ grows sufficiently slowly \cite{Dumitriu2012,Tran2013}, in which case $\rho(\lambda)$ converges to the famous  ``semi-circle distribution''~\cite{Mehta2004}.

With \cref{th:04:02} in hand, we turn our attention to weighted $k$-regular networks. Given a weight $\ell$ a $k$-regular network consists of $N$ nodes that each have $k/\ell$ links with weight $\ell$. (Note that we must choose $\ell$ so that $k/\ell$ is a positive integer less than $N$.) We now a new theorem that is akin to McKay's Thereom but is modified for weighted $k$-regular networks.

\begin{theorem}[Spectral Density of Weighted $k$-Regular Networks]\label{th:04:03}
Consider a sequence of weighted $k$-regular networks with weight $\ell$ with Laplacian matrices $L^{(m)}$, each of size $N_m$ with $N_m\to\infty$ as $m\to\infty$. Further, let $\{\lambda_i^{(m)}\}_{i=1}^{N_m}$ denote the $N_m$ eigenvalues for each $L^{(n)}$. Then the empirical spectral density $\rho_m(\lambda)= \frac{1}{N_m}\sum_{i=1}^{N_m} \delta_{\lambda_i} $ converges in probability with $m\to\infty$ as 
\begin{align}
\rho_m(\lambda) \to \rho(\lambda)=
\begin{cases}
\frac{\displaystyle k \sqrt{4 (\ell k -\ell^2)  - (\lambda-k)^2}}{\displaystyle  2 \pi \ell\left[k^2 - (\lambda-k)^2\right]} ,& \text{if $|\lambda-k| \leq 2\sqrt{\ell k-\ell^2},$}\\ 
0 ,& \text{otherwise,}
\end{cases}\label{eq:04:03}
\end{align}
\end{theorem}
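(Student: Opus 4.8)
The plan is to reduce the weighted case to the unweighted case already handled by McKay's Theorem via a simple scaling argument. Observe that a weighted $k$-regular network with weight $\ell$ and $k/\ell$ links per node has adjacency matrix $A = \ell A_0$, where $A_0$ is the adjacency matrix of an \emph{unweighted} $(k/\ell)$-regular network. Consequently the combinatorial Laplacian satisfies $L = \ell L_0$, where $L_0 = \diag(k/\ell,\dots,k/\ell) - A_0$ is the Laplacian of the unweighted $(k/\ell)$-regular network. Hence the eigenvalues obey $\lambda_i = \ell\,\lambda_i^{(0)}$, where $\{\lambda_i^{(0)}\}$ are the eigenvalues of $L_0$.

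The next step is to apply \cref{th:04:02} to the sequence $L_0^{(m)}$ with degree parameter $k' := k/\ell$. This gives that the empirical spectral density of $L_0^{(m)}$ converges in probability to the McKay density $\rho_0(\mu)$ obtained from \cref{eq:04:02} with $k$ replaced by $k'$, supported on $|\mu - k'|\le 2\sqrt{k'-1}$. Then I would push this convergence through the linear change of variables $\lambda = \ell\mu$: if $\mu$ has density $\rho_0(\mu)$ then $\lambda = \ell\mu$ has density $\rho(\lambda) = \ell^{-1}\rho_0(\lambda/\ell)$, and convergence in probability of empirical measures is preserved under a fixed continuous (here affine) pushforward, since $\frac1{N_m}\sum_i \delta_{\ell\mu_i^{(m)}}$ is exactly the pushforward of $\frac1{N_m}\sum_i\delta_{\mu_i^{(m)}}$. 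Substituting $k' = k/\ell$ and simplifying — the support condition $|\lambda/\ell - k/\ell|\le 2\sqrt{k/\ell - 1}$ becomes $|\lambda - k|\le 2\ell\sqrt{k/\ell-1} = 2\sqrt{\ell k - \ell^2}$, and the radical and denominator in \cref{eq:04:02} transform correspondingly — yields the claimed density \cref{eq:04:03} after routine algebra. One should also record the mild arithmetic caveat (already flagged before the theorem) that $\ell$ must be chosen so that $k/\ell$ is a positive integer less than $N_m$, so that the unweighted $(k/\ell)$-regular sequence to which McKay's theorem is applied actually exists.

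I do not anticipate a serious obstacle here: the only point requiring a little care is the justification that convergence in probability of the empirical spectral measures is inherited under the affine rescaling $\lambda\mapsto\ell\lambda$, which follows immediately because the rescaling is a deterministic, fixed map independent of $m$ — e.g.\ testing against a bounded continuous function $f$, one has $\int f\,d\rho_m = \int (f\circ(\ell\,\cdot))\,d\rho_{0,m} \to \int (f\circ(\ell\,\cdot))\,d\rho_0 = \int f\,d\rho$, and $f\circ(\ell\,\cdot)$ is again bounded and continuous. The remaining work is purely the bookkeeping of substituting $k\mapsto k/\ell$ into \cref{eq:04:02} and rescaling, which I would present compactly rather than expand in full.
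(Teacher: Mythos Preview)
Your proposal is correct and matches the paper's own proof essentially step for step: scale out the weight to reduce to an unweighted $(k/\ell)$-regular Laplacian, apply McKay's Theorem with degree parameter $k/\ell$, then push the limiting density through the affine change of variables $\lambda = \ell\widehat{\lambda}$ and simplify. If anything, your version is slightly more careful in justifying that convergence in probability survives the deterministic rescaling.
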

\begin{proof}
See appendix \ref{app:B}.

\end{proof}

A key characteristic of the spectral density $\rho(\lambda)$ is its support and heterogeneity. Specifically, its support is the interval $(k-2\sqrt{\ell k-\ell^2},k+2\sqrt{\ell k-\ell^2})$. Thus, as the weight $\ell$ increases the support widens, yielding more heterogeneous eigenvalues, and as $\ell$ decreases the support decreases, yielding more homogeneous eigenvalues (i.e., spectral concentration). In \cref{fig:05}, we plot observed spectral densities for several choices of $\ell \in\{0.2,0.3,0.5,1,2\}$ (colored blue to red) using $100$ networks of size $N=500$ with degree $k=10$ for each value of the weight. Symbols represent the overall observations and theoretical approximations from \cref{eq:04:03} are plotted as solid curves. We note an excellent agreement between them. 

\begin{figure}[htbp]
\centering
\includegraphics[width=0.55\textwidth]{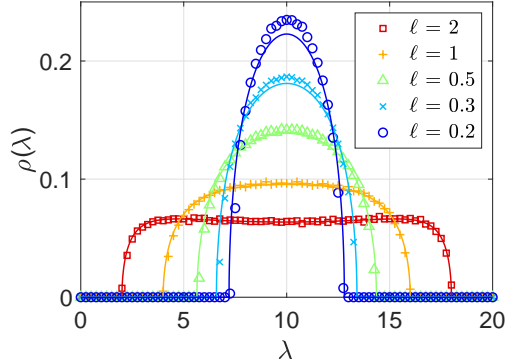}
\caption{{\bf Delocalization yields spectral concentration and promotes synchronization.} (a) Predicted spectral density $\rho(\lambda)$ given by \cref{eq:04:03} for weighted $k$-regular networks with $\ell\in[0.2,2]$. Symbols represent an empirical spectral density that is estimated using $100$ networks with $N=500$ and $k=10$. Note that decreasing $\ell$ causes the distributions to become more peaked (i.e., concentrated) around the mean $k=10$, which decreases the number of eigenvalues $\lambda_k$ close to zero and increases the SAF (i.e., since the terms $\lambda_k^{-2}$ and $\lambda_k^{-4}$ diverge as $\lambda_k\to0$).}\label{fig:05}
\end{figure}

Importantly, given an approximation of the spectral density, we can approximate the expected value and the variance of the SAF, given in \cref{eq:03:08} and \cref{eq:03:16}. For sufficiently large $N$ the expected value of the SAF can be approximated by
\begin{align}
\mathbb{E}[J(\bm{\omega},L)]&=\mathrm{Var}(\omega)\left(\frac{1}{N}\sum_{j=2}^N\frac{1}{\lambda_j^2}\right)\approx\mathrm{Var}(\omega)\int_{k-2\sqrt{\ell k-\ell^2}}^{k+2\sqrt{\ell k-\ell^2}}\frac{\rho(\lambda)}{\lambda^2}d\lambda.\label{eq:04:09}
\end{align}
On the other hand, the variance of the SAF can be approximated by
\begin{align}
\mathrm{Var}[J(\bm{\omega},L)]&=\frac{2\mathrm{Var}(\omega)^2}{N}\left(\frac{1}{N}\sum_{j=2}^N\frac{1}{\lambda_j^4}\right)\approx\frac{2\mathrm{Var}(\omega)^2}{N}\int_{k-2\sqrt{\ell k-\ell^2}}^{k+2\sqrt{\ell k-\ell^2}}\frac{\rho(\lambda)}{\lambda^4}d\lambda.\label{eq:04:10}
\end{align}
In \cref{fig:06} (a) and (b), we plot the expected value and variance, respectively, of the SAF versus localization $\ell$ for each of the $100$ networks used to create \cref{fig:05}. Observations from individual networks are plotted in circles and the dashed curves represent the predictions given by
\cref{eq:04:09} and \cref{eq:04:10}, respectively

\begin{figure}[htbp]
\centering
\includegraphics[width=0.45\textwidth]{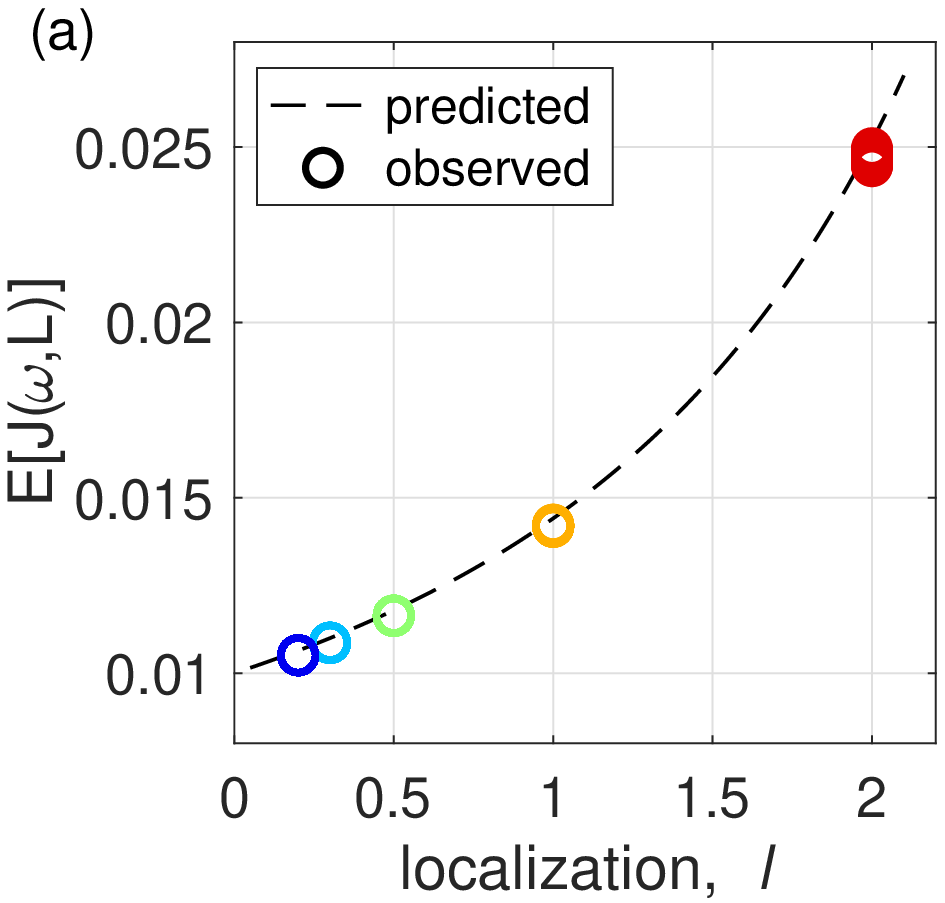}
\includegraphics[width=0.45\textwidth]{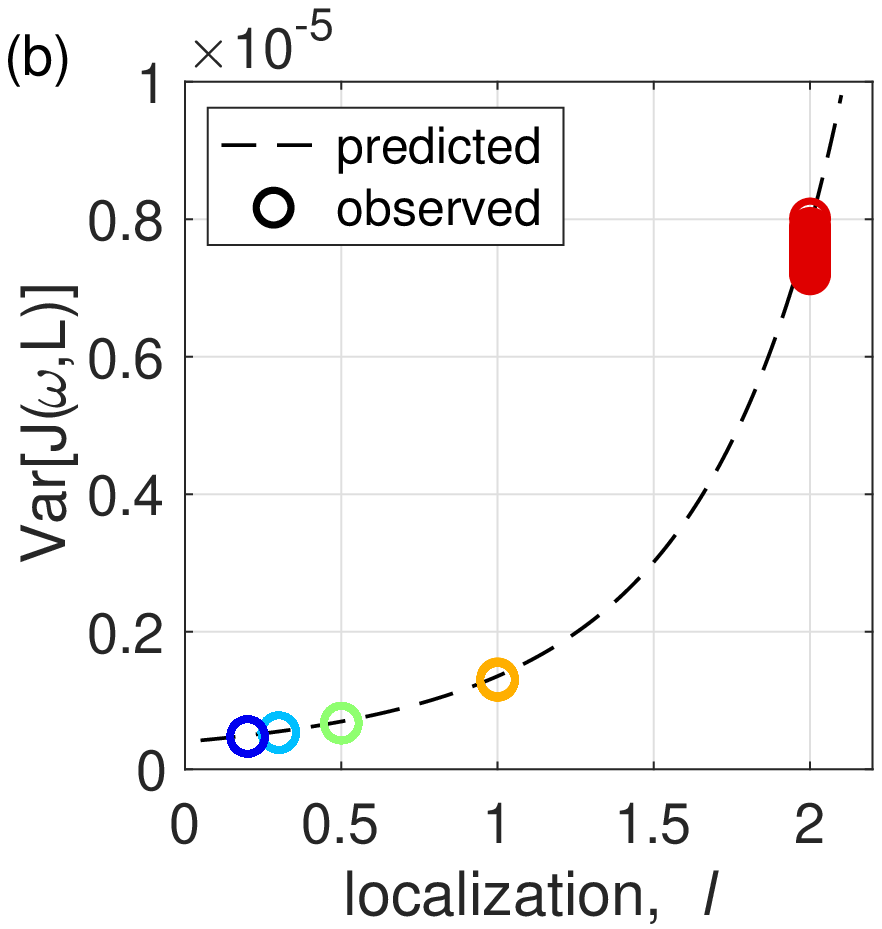}
\caption{{\bf SAF Expectation and variance of $k$-regular networks.} 
(a) Expectation and (b) variance of the SAF for $k$-regular networks with localization $\ell\in[0.2,2]$ and IID normal-distributed frequencies. Dashed curves in (a) and (b) indicate \cref{eq:04:09} and \cref{eq:04:10}. Symbols indicate empirically observed values from a collection of $100$ networks.}\label{fig:06}
\end{figure}

These results deserve a few remarks. First, we are ultimately interested in the behavior of the expected value and variance of the SAF via the sums $\sum_{j\ge2}\lambda_j^{-2}$ and $\sum_{j\ge2}\lambda_j^{-4}$, respectively. Second, we note that as network weights become more delocalized the spectral density becomes more homogeneous (i.e., concentrated), with eigenvalues clustering more tightly around the mean degree $k$. This yields smaller values of $\mathbb{E}[J(\bm{\omega},L)]$ and $\mathrm{Var}[J(\bm{\omega},L)]$, i.e., better and less variable synchronization properties of a network. Also note that, because the degree $k$ remains constant, even as localization is varied, the mean of the eigenvalue distribution remains constant since the sum of the eigenvalues equals the trace of the Laplacian, which is in turn equal to the sum of the degrees, $Nk$. Thus, as the network links become more delocalized the smaller eigenvalues increase away from zero, contributing smaller values to the sums $\sum_{j\ge2}\lambda_j^{-2}$ and $\sum_{j\ge2}\lambda_j^{-4}$, thereby improving synchronization. On the other hand, as the network links become more localized the smaller eigenvalues move closer to zero, contributing larger values to the sums $\sum_{j\ge2}\lambda_j^{-2}$ and $\sum_{j\ge2}\lambda_j^{-4}$, thereby increasing the SAF and inhibiting synchronization.

\begin{figure}[htbp]
\centering
\includegraphics[width=0.49\textwidth]{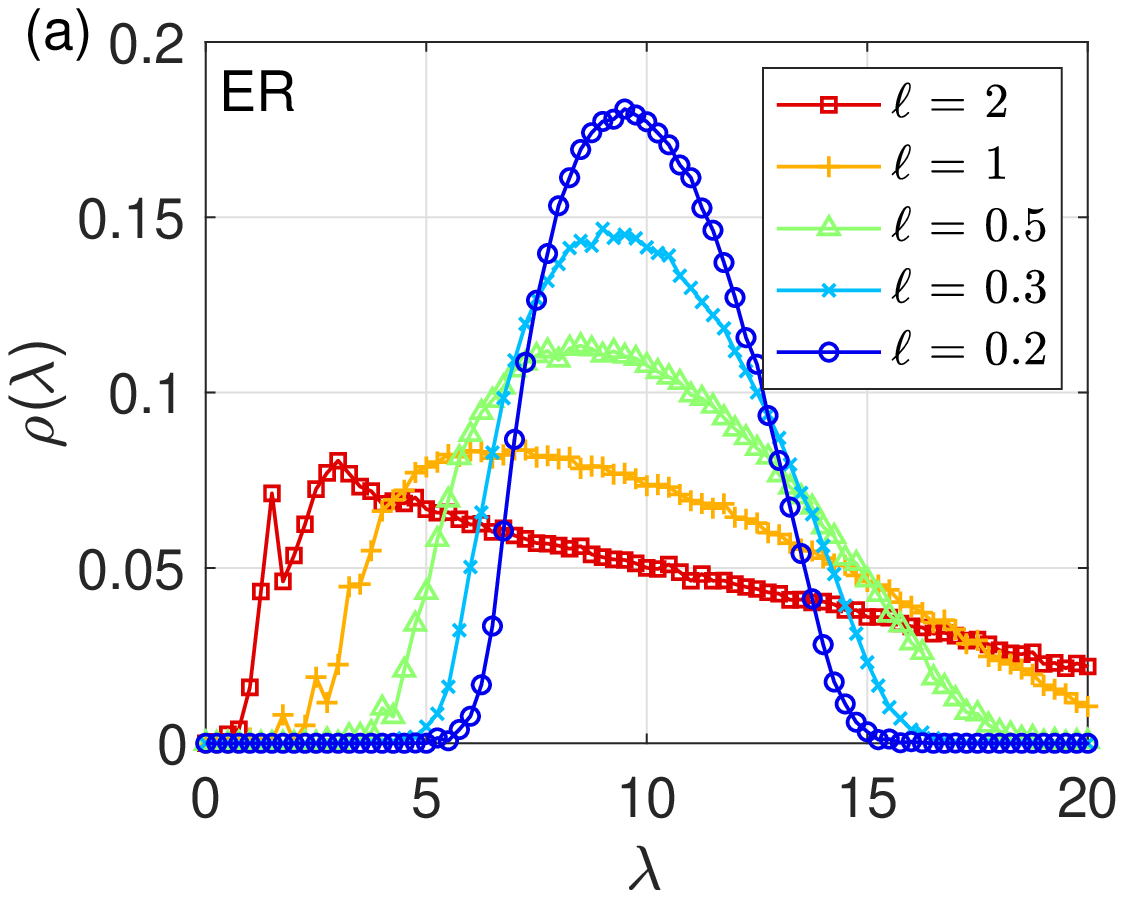}
\includegraphics[width=0.49\textwidth]{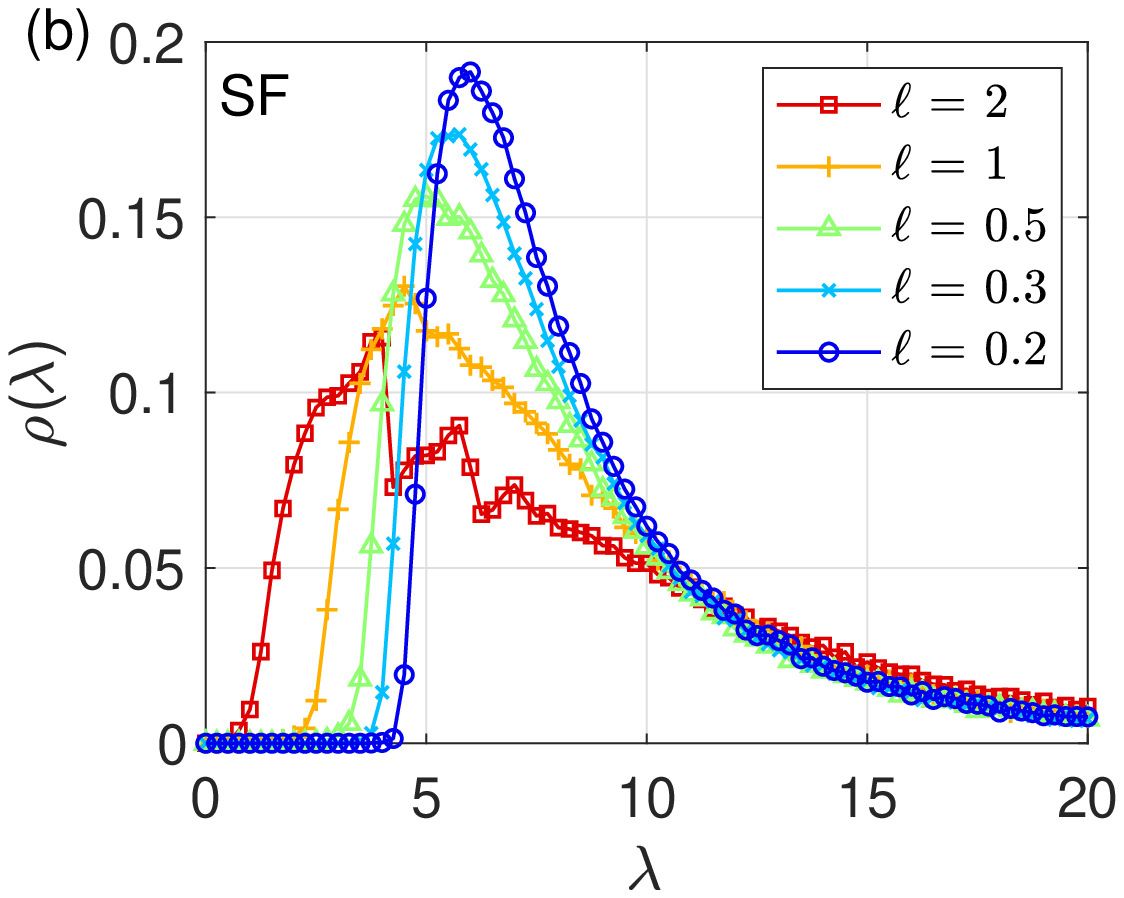}
\caption{{\bf Laplacian spectral densities for ER and SF networks}. Empirically observed spectral densities $\rho(\lambda)$ of unnormalized Laplacian matrices for (a) ER and (b) SF networks with varying localization $\ell$. Each curve represents an average across $100$ networks.} \label{fig:07}
\end{figure}

We now return to ER and SF networks, showing that a similar phenomenon occurs whereby delocalization yields a concentration of the spectral density, which ultimately decreases the SAF. We illustrate this in \cref{fig:07} (a) and (b), where we plot the spectral densities observed from collections of $100$ ER and SF networks, respectively, for several localization values ranging from $\ell=0.2$ (blue circles) to $2$ (red squares). Even for the case of ER networks, which are weakly heterogeneous compared to SF networks, it is clear that the spectral densities differ significantly from those of $k$-regular networks plotted in \cref{fig:05}. However, the same qualitative behavior persists in the context of varying weight localization as with the case of the $k$-regular networks. In particular, weight delocalization (localization) results in spectral densities that are more (less) clustered about the mean, which remains constant. Therefore, weight delocalization increases the smallest eigenvalues, resulting in both smaller expected values and variances of the SAF, thereby promoting synchronization.

\subsection{Network Directedness Promotes Synchronization}
\label{subsec:04:02}

In our analysis above we have shown that weight delocalization tends to improve the synchronization properties of networks, i.e., networks with more, weaker connections synchronize heterogeneous oscillators better than networks with fewer, stronger connections. Another fair interpretation of this phenomenon is that, to improve a network's synchronization properties, one should seek to connect more distinct pairs of oscillators, even at the expense of making the connections weaker. We now turn to another network property that can be used to attain a similar effect, namely network directedness, which we define as follows.

\begin{definition}[Network Directedness]\label{def:Directedness}
Consider a weighted network consisting of $N$ nodes with non-negative adjacency matrix $A$. The directedness of the network, denoted $p_{\text{dir}}$ is given by fraction of weighted links $m\to n$ that do not have an equal and opposite link $n\to m$,
\begin{align}
p_{\text{dir}}=\frac{1}{2N\langle k\rangle}\sum_{n,m}\left|A_{nm}-A_{mn}\right|.\label{eq:04:11}
\end{align}
\end{definition}

The directedness $p_{\text{dir}}$ can also be interpreted as the normalized sum over all entries of the absolute value of the matrix $A-A^T$. Thus, in the case of an undirected network we have that $A-A^T$ is identically zero, yielding $p_{\text{dir}}=0$. On the other hand, when each and every link in the network has no opposite counterpart, i.e., each non-zero entry of $A$ corresponds to to a zero entry of $A^T$, we have that $p_{\text{dir}}=1$ after accounting for the normalization by $2N\langle k \rangle$. Intermediate scenarios, where some links do and some links do not have an opposite counterpart, intermediate values of $p_{\text{dir}}$ are taken.

\begin{figure}[htbp]
\centering
\includegraphics[width=0.9\textwidth]{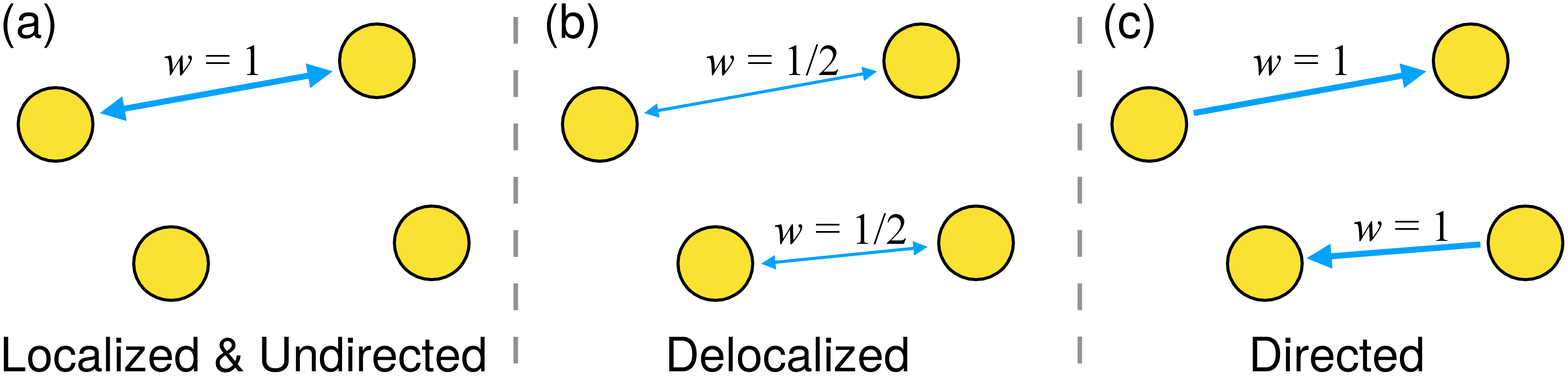}
\caption{{\bf Directedness and delocalization: Illustration}. (a) A toy network consisting of four nodes with one undirected link having weight 1 as well as (b) delocalized and (c) directed alternatives that are obtained by replacing the single undirected link with two links while conserving the networks' total link weight, $\sum_{ij}A_{ij}$ (which is 2 in each case).}\label{fig:08}
\end{figure}

In the context of studying network structures having the same size and degree (i.e., the same total link weight $\sum_{ij}A_{ij}$), increasing the directedness of a network increases the total number of pairs of nodes in the network that have a connection between them, similar to the process of delocalization. This is illustrated in \cref{fig:08}, where four nodes with just one undirected link with weight $\ell=1$ are shown in panel (a). In panel (b), we illustrate a delocalized alternative, where the set of four nodes instead share two undirected links, now with with weight $\ell=1/2$. Note that this coincides with two, not one, distinct pairs of connected nodes, while the total number of weighted links is the same as in panel (a). In panel (c), we illustrate the directed alternative, where the four nodes now share two binary directed links, again yielding two distinct pairs of connected nodes as in panel (b). We note also that in certain practical scenarios networks structures may be required to be binary, making delocalization unfeasible. Tuning directedness, however provides a more realistic alternative given the ability to connect more pairs of nodes with unweighted directed links.

\begin{figure}[htbp]
\centering
\includegraphics[width=0.55\textwidth]{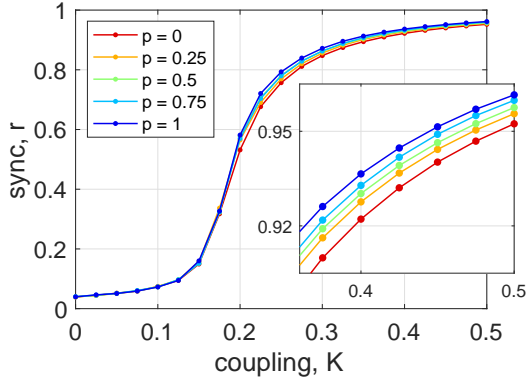}
\caption{{\bf Effect of directedness on synchronization}. The Kuramoto order parameter $r$ versus coupling strength $K$ in unweighted ER networks with directedness varying from $p_{\text{dir}}=0$ (red) to $1$ (blue). All networks are of size $N=500$ with mean degree $\langle k \rangle=10$ and $\gamma=3$ for SF, and data points represent an average over $50$ network realization. The frequencies are drawn as IID normal-distributed random variables. Inset: a zoomed-in view of the synchronization profile in the strongly synchronized regime.}\label{fig:09}
\end{figure}

\begin{figure}[htbp]
\centering
\includegraphics[width=0.49\textwidth]{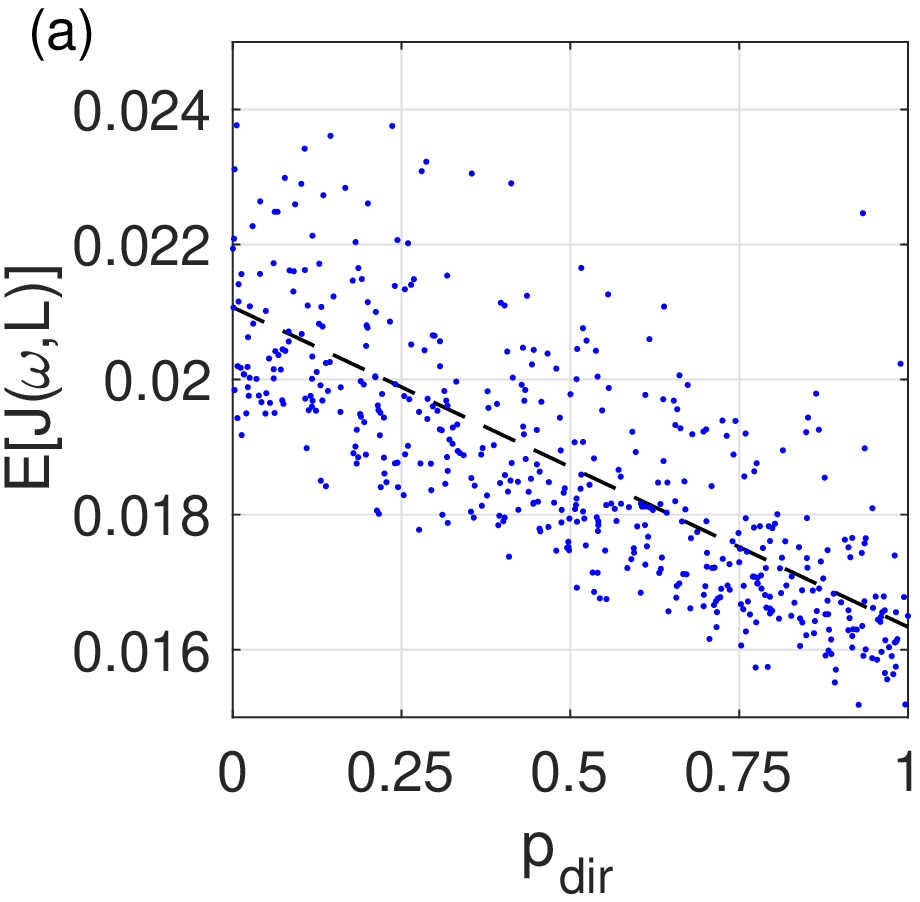}
\includegraphics[width=0.49\textwidth]{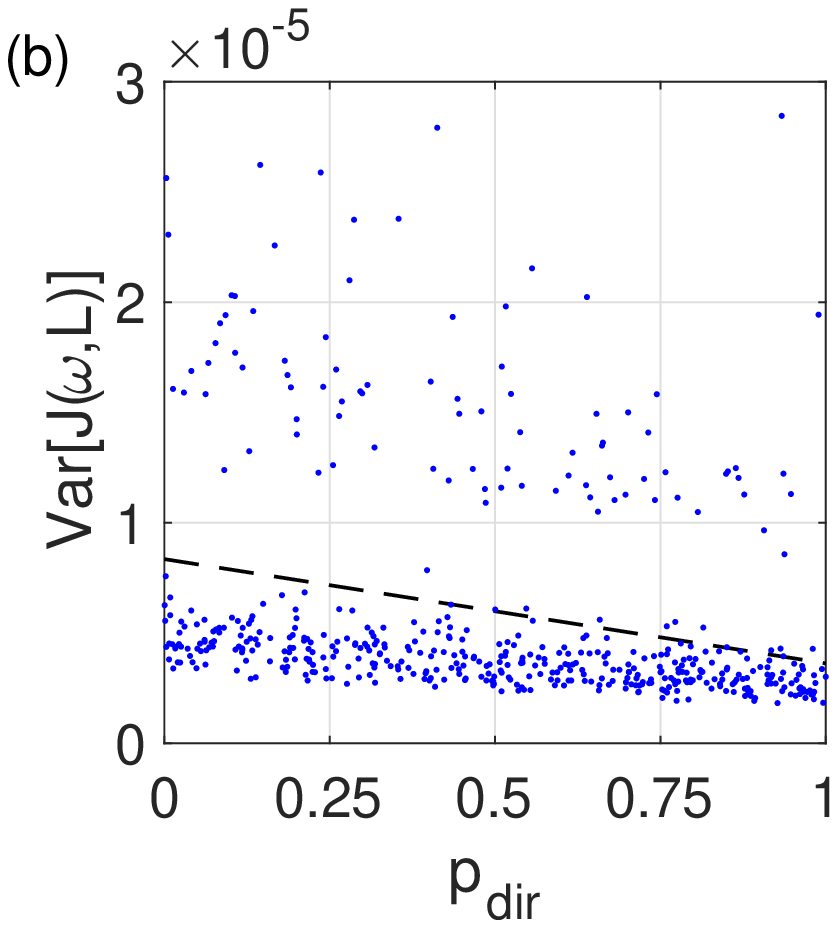}
\caption{{\bf Directedness: Expectation and variance of the SAF}. (a) The expected value and (b) variance of the SAF for a collection of $500$ unweighted ER networks of size $N=500$ with mean degree $\langle k \rangle=10$ with directedness $p_{\text{dir}}$ drawn randomly and uniformly from the interval $[0,1]$. The frequencies are drawn as IID normal-distributed random variables. We also plot the least squares lines, given by $\mathbb{E}[J(\bm{\omega},L)]\approx-4.70\times10^{-3} p_{\text{dir}}+2.11\times10^{-2}$ and $\mathrm{Var}[J(\bm{\omega},L)]\approx-4.73\times10^{-6} p_{\text{dir}}+8.35\times10^{-6}$, respectively. }\label{fig:10}
\end{figure}

We proceed by investigating the effect of directedness on synchronization by plotting in \cref{fig:09} the Kuramoto order parameter $r$ from a collection of unweighted ER networks of size $N=500$ with mean degree $\langle k\rangle =10$ that are iteratively rewired to become more directed. In particular, we vary directedness from $p_{\text{dir}}=0$ (totally undirected, red) to $1$ (completely directed, blue). Each data point represents the average of $r$ over a collection of $50$ networks. To emphasize the improved synchronization properties of more directed networks, we also plot in the inset a zoomed-in view of the synchronization profile in the strongly synchronized regime. To better understand this behavior, we consider a collection of $500$ ER networks (also with $N=500$ and $\langle k\rangle=10$), each assigned a directedness $p_{\text{dir}}$ randomly and uniformly drawn from the interval $[0,1]$. In \cref{fig:10} (a) and (b), we plot the expected value and variance of the SAF, respectively. For each case we also plot the least squares line, which for the data collected here are given by $\mathbb{E}[J(\bm{\omega},L)]\approx-4.70\times10^{-3} p_{\text{dir}}+2.11\times10^{-2}$ and $\mathrm{Var}[J(\bm{\omega},L)]\approx-4.73\times10^{-6} p_{\text{dir}}+8.35\times10^{-6}$, respectively. Interestingly, while the both slopes are negatively sloped, only the expected value $\mathbb{E}[J(\bm{\omega},L)]$ displays a sufficiently strong trend, even though significant fluctuations are present. Thus, increasing the directedness of a network appears to have a significant impact on the expected value of the SAF, but relatively little effect on the variance.

\subsection{Degree-Frequency Correlations Promote Synchronization}
\label{subsec:04:03}

The last property we investigate  is the presence of correlations between local networks structure and natural frequencies, i.e., structure-dynamics assortivity~\cite{Newman2003PRE}. Previous work has shown that positive correlations between oscillators' nodal degrees and natural frequencies can give rise to explosive synchronization and other novel dynamical behaviors~\cite{Gomez2011PRL,Skardal2013EPL,Skardal2014PRE}. Moreover, positive correlations between oscillators' nodal degree and the absolute value of natural frequencies appear in oscillator networks engineered for optimal synchronization~\cite{Brede2008PLA,Papadopoulos2017Chaos,Pinto2015PRE,Skardal2014PRL,Skardal2016Chaos}. However, especially in this latter case, the observation of correlations in optimized oscillator networks does not imply that such correlations generically promote synchronization properties. Here, we use the framework of \cref{sec:03} to investigate this specific relationship. 

From the viewpoint of uncertain frequencies developed herein, correlations between network structures and specific instances of natural frequencies may arise from properties related to either the means or the variances or both of these latent variables. We focus on two possible kinds of correlations: (i) correlations between the oscillators' nodal degrees $k_i$ and natural frequency magnitudes $|\mu_i|$ and (ii) correlations between the oscillators' nodal degrees $k_i$ and the natural frequency variances $\sigma_i^2$. For a given oscillator network structure such correlations may be measured using the Pearson correlation coefficient
\begin{align}
\rho_{x,y}=\frac{\sum_{n=1}^N(x_i-\langle x\rangle)(y_i-\langle y\rangle)}{\sqrt{\sum_{n=1}^N(x_i-\langle x\rangle)^2}\sqrt{\sum_{n=1}^N(y_i-\langle y\rangle)^2}}\label{eq:04:12}
\end{align}
for generic nodal properties $x_i$ and $y_i$ for $i=1,\dots,N$, so that degree-frequency mean and degree-frequency variance correlations may be measured using $\rho_{k,|\mu|}$ and $\rho_{k,\sigma^2}$, respectively. 

For simplicity we consider two different cases. To study degree-frequency mean correlations we let means $\mu_i$ be drawn randomly and uniformly from the interval $[-5,5]$ and natural frequencies be drawn independently (so that covariances are zero) with equal variance $\mathrm{Var}(\omega)$, in which case $\mathbb{E}[J(\bm{\omega,L})]$ and $\mathrm{Var}[J(\bm{\omega},L)]$ are described by \cref{cor:03:04} and \cref{eq:03:09}, respectively. To study degree-frequency variance correlations we let means $\mu_i=\mu=0$ be identically zero and natural frequencies be drawn independently (so that covariances are zero) with variances $\sigma_i^2$ drawn randomly and uniformly from the interval $[1,10]$, in which case $\mathbb{E}[J(\bm{\omega,L})]$ and $\mathrm{Var}[J(\bm{\omega},L)]$ are described by \cref{cor:03:03} and \cref{eq:03:08}, respectively. We  investigate the effect that degree-frequency mean and degree-frequency variance correlations have in collection of ER networks of size $N=500$ with mean degree $\langle 10\rangle$. For each network, after means $\mu_i$ or variances $\sigma_i^2$ are initially drawn, we rearrange them on the network to attain a correlation coefficient $\rho$ randomly drawn from the interval $[-1,1]$. 

\begin{figure}[htbp]
\centering
\includegraphics[width=0.49\textwidth]{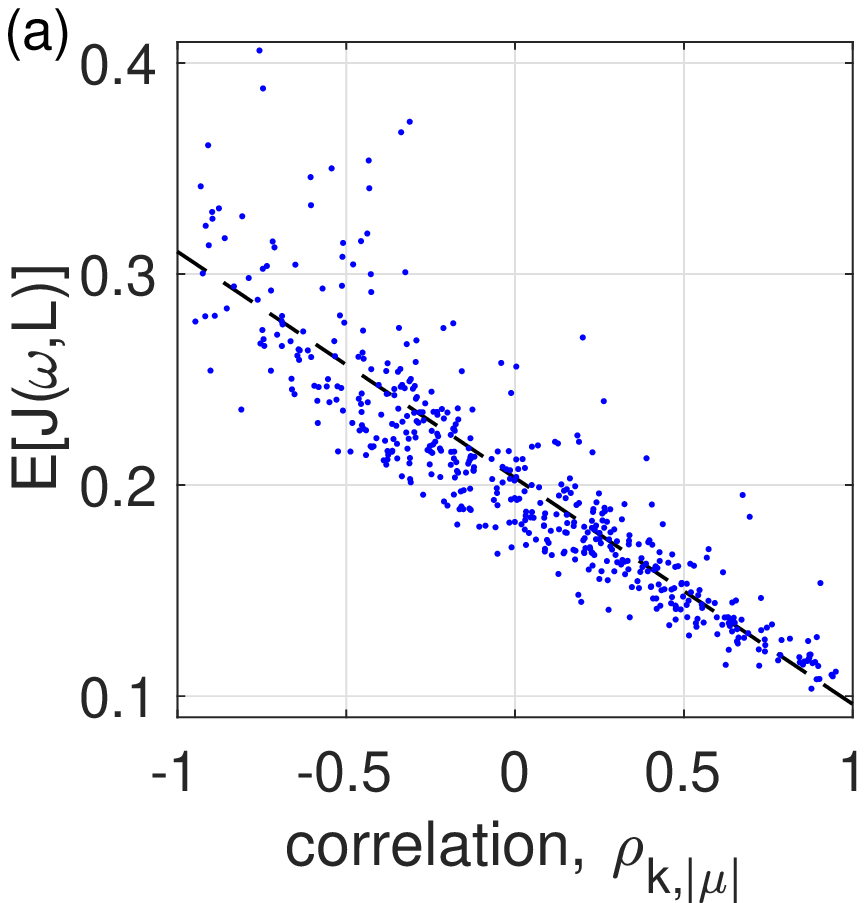}
\includegraphics[width=0.49\textwidth]{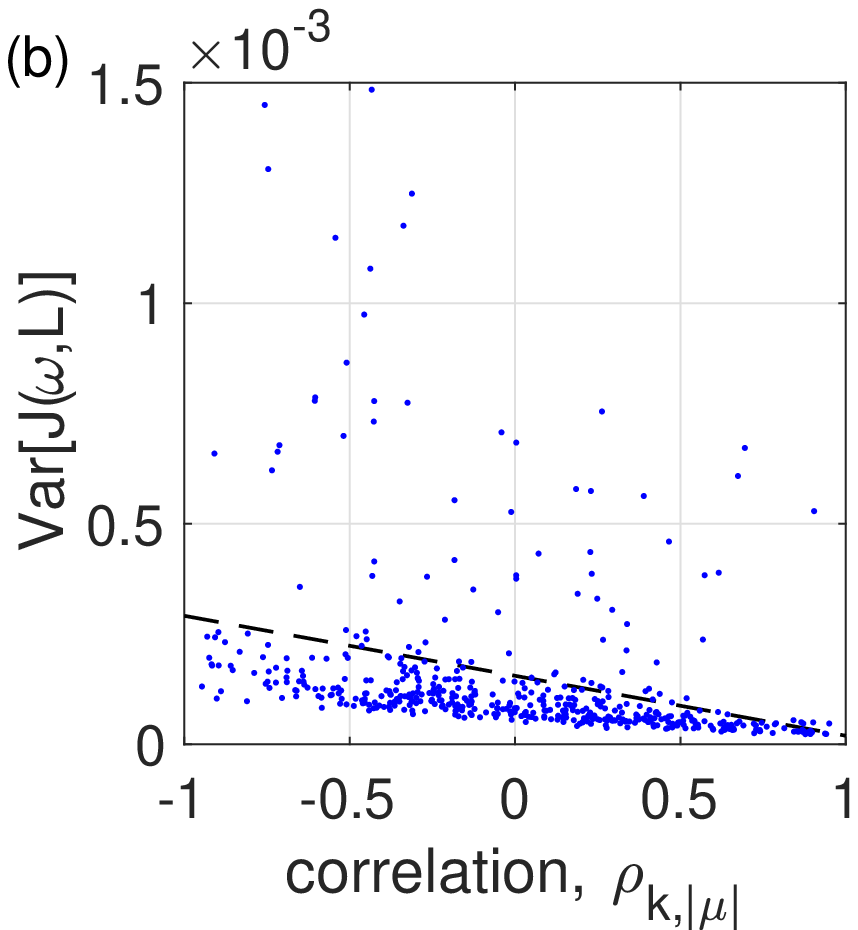}
\caption{{\bf Degree-frequency mean correlations: Expectation and variance of the SAF}. (a) The expected value and (b) variance of the SAF for a collection of $500$ unweighted, undirected ER networks of size $N=500$ with mean degree $\langle k \rangle=10$ with frequency means $\mu_i$ drawn randomly and uniformly from the interval $[-5,5]$ and allocated to randomly generate a degree-frequency mean correlation coefficient $\rho_{k,|\mu|}$ in the interval $[-1,1]$. We also plot the least squares lines, given by $\mathbb{E}[J(\bm{\omega},L)]\approx-1.07\times10^{-1}\rho_{k,|\mu|}+2.03\times10^{-1}$ and $\mathrm{Var}[J(\bm{\omega},L)]\approx-1.36\times10^{-4}\rho_{k,|\mu|}+1.55\times10^{-4}$, respectively.}\label{fig:11}
\end{figure}

In \cref{fig:11}, we present the result from degree-frequency mean correlations, plotting the expected value $\mathbb{E}[J(\bm{\omega},L)]$ and variance $\mathrm{Var}[J(\bm{\omega},L)]$ of the SAF versus the correlation coefficient $\rho_{k,|\mu|}$ in panels (a) and (b), respectively. We note a strong relationship between $\mathbb{E}[J(\bm{\omega},L)]$ and $\rho_{k,|\mu|}$, captured roughly by the line $\mathbb{E}[J(\bm{\omega},L)]\approx-1.07\times10^{-1}\rho_{k,|\mu|}+2.03\times10^{-1}$, indicating that positive correlations between degrees $k$ and natural frequency mean magnitudes $\mu$ significantly promotes synchronization. On the other hand, a weaker relationship exists between $\mathrm{Var}[J(\bm{\omega},L)]$ and $\rho_{k,|\mu|}$, captured roughly by the line $\mathrm{Var}[J(\bm{\omega},L)]\approx-1.36\times10^{-4}\rho_{k,|\mu|}+1.55\times10^{-4}$, indicating that positive correlations between degrees $k$ and natural frequency mean magnitudes $\mu$ decrease the variance of the SAF, but not significantly.

\begin{figure}[htbp]
\centering
\includegraphics[width=0.49\textwidth]{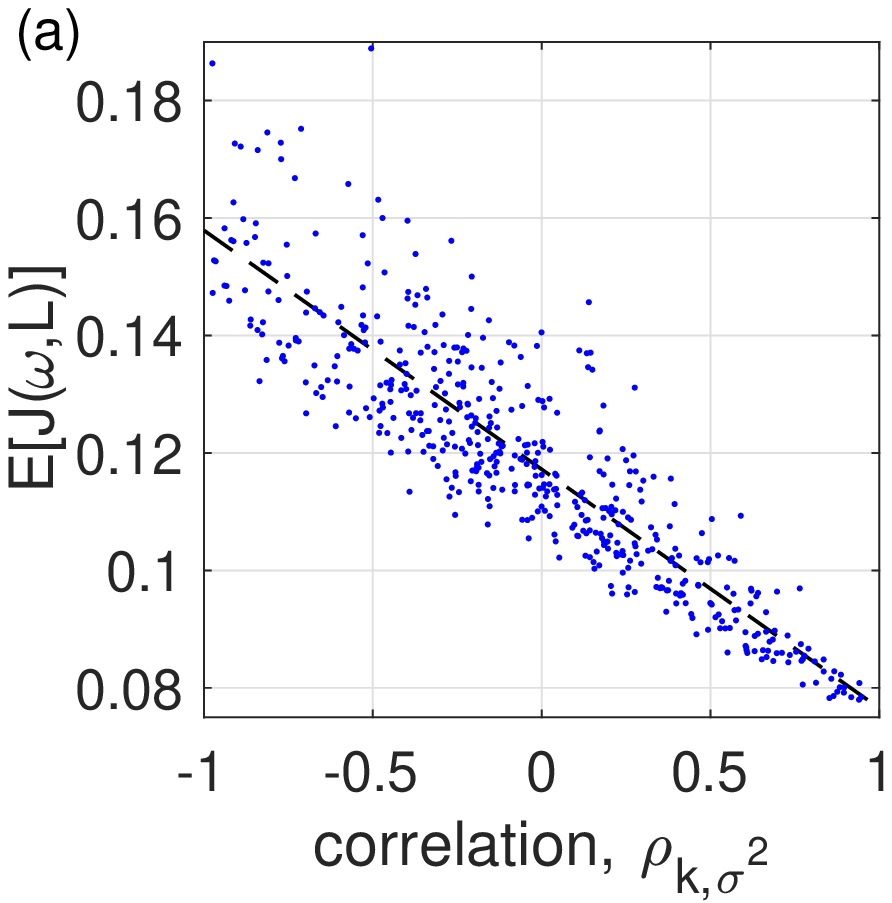}
\includegraphics[width=0.49\textwidth]{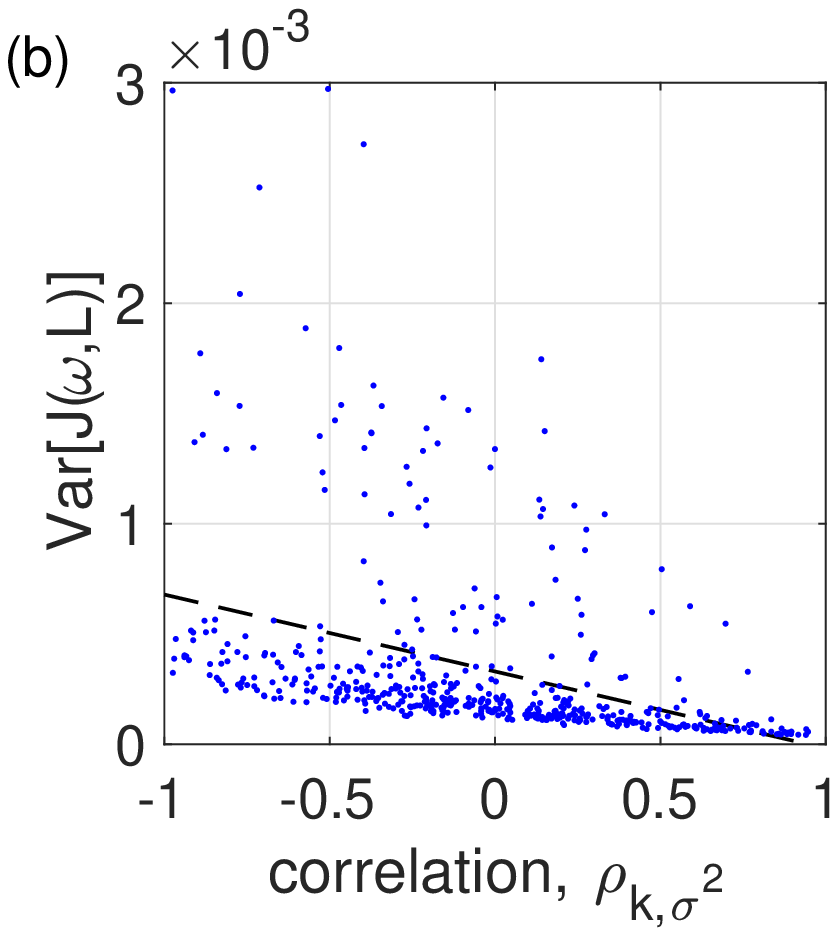}
\caption{{\bf Degree-frequency variance correlations: Expectation and variance of the SAF}. (a) The expected value and (b) variance of the SAF for a collection of $500$ unweighted, undirected ER networks of size $N=500$ with mean degree $\langle k \rangle=10$ with frequency variances $\sigma_i^2$ drawn randomly and uniformly from the interval $[1,10]$ and allocated to randomly generate a degree-frequency variance correlation coefficient $\rho_{k,\sigma^2}$ in the interval $[-1,1]$. We also plot the least squares lines, given by $\mathbb{E}[J(\bm{\omega},L)]\approx-4.07\times10^{-2}\rho_{k,\sigma^2}+1.17\times10^{-1}$ and $\mathrm{Var}[J(\bm{\omega},L)]\approx-3.49\times10^{-4}\rho_{k,\sigma^2}+3.30\times10^{-4}$, respectively.}\label{fig:12}
\end{figure}

Overall, we find a similar results for the case of degree-frequency variance correlations, presented in \cref{fig:11}, where we plot the expected value $\mathbb{E}[J(\bm{\omega},L)]$ and variance $\mathrm{Var}[J(\bm{\omega},L)]$ of the SAF versus the correlation coefficient $\rho_{k,\sigma^2}$ in panels (a) and (b), respectively. Again, we note a strong relationship between $\mathbb{E}[J(\bm{\omega},L)]$ and $\rho_{k,\sigma^2}$, captured roughly by the line $\mathbb{E}[J(\bm{\omega},L)]\approx-4.07\times10^{-2}\rho_{k,\sigma^2}+1.17\times10^{-1}$, indicating that positive correlations between degrees $k$ and natural frequency mean variances $\sigma^2$ significantly promotes synchronization. On the other hand, a weaker relationship exists between $\mathrm{Var}[J(\bm{\omega},L)]$ and $\rho_{k,\sigma^2}$, captured roughly by the line $\mathrm{Var}[J(\bm{\omega},L)]\approx-3.49\times10^{-4}\rho_{k,\sigma^2}+3.30\times10^{-4}$, indicating that, again, positive correlations between degrees $k$ and natural frequency mean variances $\sigma^2$ decrease the variance of the SAF, but not significantly.

\section{Discussion}\label{sec:05}

In this paper, we have investigated the behavior of the synchrony alignment function, or SAF, (see \cref{subsec:02:02} for a definition), which is a framework for evaluating how the interplay of structural network properties and internal dynamics (i.e., natural frequencies) shape the overall synchronization properties of heterogeneous oscillator networks. Extending previous research of the SAF \cite{Skardal2014PRL,Skardal2016Chaos,Skardal2015PRE2,Skardal2016Chaos,Taylor2016SIAP,Skardal2016Chaos} that treats the oscillators' natural frequencies as deterministic parameters, in \cref{sec:03} we considered the oscillators' natural frequencies as random variables and derived analytical expressions for the expectation and  variance of the SAF. Theorems~\ref{th:03:02} and \ref{th:03:07} and their corollaries characterize the expectation and variance of the SAF for when the frequencies are uncertain---they are drawn from a distribution with given mean and variance. 
Knowledge of the precise distribution is not required. 

An important insight from Theorem~\ref{th:03:02} is that the expected SAF yields two terms (one depends on the mean frequencies $\bm{\mu}$ and the other depends on the covariance matrix $\Sigma$), and one can consider  optimizing these two terms independently. The first term is the SAF, $J(\bm{\mu},L)$, with the expected frequencies as the first argument, rather than the particular frequencies $\bm{\omega}$. This provides theoretical support for why one can optimize the SAF using approximated frequencies, even when the exact frequencies are unknown [see also figure 6.2(c) in  \cite{Taylor2016SIAP} for a related numerical experiment]. One can optimize $J(\bm{\mu},L)$ under a variety of system constraints on $L$ and/or $\bm{\mu}$ using the same techniques that were explored in previous research on the SAF \cite{Skardal2014PRL,Skardal2016Chaos,Taylor2016SIAP}. For example, one can consider fixing the network structure and tuning the oscillators' frequencies, fixing the oscillators' frequencies and tuning the network structure design, or tuning both the network structure and the oscillators' frequencies. Optimizing the second term in Theorem~\ref{th:03:02}, $\frac{1}{N}\mathrm{Tr}[U(S^{\dagger})^2U^T \Sigma]$ is less straightforward and one's strategy may differ depending on the assumptions about $\Sigma$. 

Rather then explore the application of these results for optimization, however, we have applied them to gain fundamental insight into which network structure and dynamical properties promote synchronization. In particular, in \cref{sec:04} we provided theoretical and experimental for how link weight delocalization, link directedness, and degree-frequency correlations all enhance the synchronization properties of heterogeneous oscillator systems. For the case of IID-distributed random frequencies, we showed the expectation and variance of the SAF are proportional to the second and fourth moments of the spectral density (singular values in the case of directed networks and eigenvalues in the case of undirected networks). Intuitively, these spectral moments can be interpreted as measures for spectral heterogeneity, and our framework predicts synchronization is promoted by spectral concentration, i.e., a lack of spectral heterogeneity. This finding establishes an interesting connection to the study of synchronization for identical oscillators \cite{Barahona2002PRL,Nishikawa2010PNAS,Pecora1998PRL,Sun2009EPL}, for which the eigenratio --- one  measure for spectral heterogeneity --- is a widely accepted measure for the synchronizability of a system \cite{Barahona2002PRL}. Our findings for heterogeneous-but-uncertain frequencies are therefore in agreement with, but notably different  from this theory. We expect many applications can be benefited by treating the oscillator frequencies as uncertain --- that is rather than identical or deterministic --- and  the results presented herein provide an important step in this direction.

\appendix

\section{Proof of Theorem \cref{th:03:01}}\label{app:A}
\begin{proof}
This result is well known and is given as Theorem 5.2a in \cite{Rencher2008}. To provide the reader insight, we provide a simple proof here:
\begin{align}
E[ \bm{y}^T X\bm{y}] &= E\left[\sum_{i=1}^n \sum_{j=1}^n  X_{i,j} \bm{y}_i \bm{y}_j\right]\\
&= \sum_{i=1}^n \sum_{j=1}^n X_{i,j}E [\bm{y}_i \bm{y}_j]\\
&= \sum_{i=1}^n \sum_{j=1}^n X_{i,j}(\Sigma_{i,j}+\mu_i\mu_j)\\
 &= \sum_{i=1}^n \sum_{j=1}^n X_{i,j}\Sigma_{j,i} +\sum_{i=1}^n \sum_{j=1}^n X_{i,j}\mu_i\mu_j\\
&= \sum_{i=1}^n [ X\Sigma]_{i,i} + \bm{ \mu}^T X \bm{ \mu}\\
&= \textrm{Tr}[ X \Sigma] + \bm{ \mu}^T X\bm{ \mu}.
\end{align}
\end{proof}

\section{Proof of Theorem \cref{th:04:03}}\label{app:B}

We begin by considering, not the sequence of matrices $L^{(m)}$, but $\widehat{L}^{(m)}=L^{(m)}/w$. Note that the new matrices $\widehat{L}^{(m)}$ have off-diagonal entries that are either $0$ or $-1$, and therefore can be treated with \cref{th:04:02}. However, the new node degrees associated with the matrix $\widehat{L}^{(m)}$ is $k/\ell$, so the spectral densities limit to the distribution
\begin{align}
\widehat{\rho}_m(\widehat{\lambda}) \to \widehat{\rho}(\widehat{\lambda})=
\begin{cases}
\frac{\displaystyle k/\ell \sqrt{4 (k/\ell -1)  - (\widehat{\lambda}-k/\ell)^2}}{\displaystyle  2 \pi \left[(k^2/\ell^2 - (\widehat{\lambda}-k/\ell)^2)\right]} ,& \text{if $|\widehat{\lambda}-k/\ell| \leq 2
  \sqrt{k/\ell-1},$}\\ 
0 ,& \text{otherwise.}
\end{cases}\label{eq:04:04}
\end{align}
Now, since each $L^{(m)}=\ell\widehat{L}^{(m)}$ we have that $\widehat{\lambda}_n^{(m)}=\lambda_n^{(m)}/\ell$. Thus, we have that as $m\to\infty$ the spectral density functions $\rho_m(\lambda)\to\rho(\lambda)$, where 
\begin{align}
\rho(\lambda)&=\left|\frac{d\widehat{\lambda}}{d\lambda}\right|\widehat{\rho}(\lambda/\ell)\label{eq:04:05}\\
&=\begin{cases}
\left|\frac{1}{\ell}\right|\frac{\displaystyle k/\ell \sqrt{4 (k/\ell -1)  - (\lambda/\ell-k/\ell)^2}}{\displaystyle  2 \pi \left[k^2/\ell^2 - (\lambda-k)^2/\ell^2\right]} ,& \text{if $\left|\frac{\lambda}{\ell}-\frac{k}{\ell}\right| \leq 2\sqrt{\frac{k}{\ell}-1},$}\\ 
0 ,& \text{otherwise,}
\end{cases}\label{eq:04:06}\\
&=\begin{cases}
\frac{\displaystyle k \sqrt{4 (k/\ell -1)  - (\lambda/\ell-k/\ell)^2}}{\displaystyle  2 \pi \left[k^2 - (\lambda-k)^2\right]} ,& \text{if $|\lambda-k| \leq 2\sqrt{\ell k-\ell^2},$}\\ 
0 ,& \text{otherwise,}
\end{cases}\label{eq:04:07}\\
&=\begin{cases}
\frac{\displaystyle k \sqrt{4 (\ell k -\ell^2)  - (\lambda-k)^2}}{\displaystyle  2 \pi \ell\left[k^2 - (\lambda-k)^2\right]} ,& \text{if $|\lambda-k| \leq 2\sqrt{\ell k-\ell^2},$}\\ 
0 ,& \text{otherwise,}
\end{cases}\label{eq:04:08}
\end{align}
which completes the proof.


\bibliographystyle{siamplain}
\bibliography{references}
\end{document}